\newcommand{\E}{\mathbb{E}}
\newcommand{\ed}{\mathrm{d}}
\newcommand{\R}{\mathbb{R}}
\newcommand{\id}{{\bf 1}}
\renewcommand{\P}{\mathbb{P}}
\newcommand{\Q}{\mathbb{Q}}
\newcommand{\BS}{\mathrm{BS}}
\newcommand{\DT}{\ed {\cal T}(t)}
\newcommand{\DS}{\ed {\cal T}(s)}
\newtheorem{theorem}{Theorem}
\newtheorem{lemma}[theorem]{Lemma}
\newtheorem{remark}[theorem]{Remark}
\newtheorem{proposition}[theorem]{Proposition}
\newtheorem{algorithm}[theorem]{Algorithm}
\newcolumntype{L}[1]{>{\raggedright\let\newline\\\arraybackslash\hspace{0pt}}p{#1}}
\theoremstyle{definition}
\newtheorem{definition}[theorem]{Definition}
\DeclareMathOperator{\tr}{tr}
\DeclareMathOperator{\ri}{ri}
\numberwithin{equation}{section}
\numberwithin{theorem}{section}
\title{Collectivised Pension Investment with Exponential Kihlstrom--Mirman Preferences}
\author{John Armstrong, Cristin Buescu}
\begin{document}

\maketitle

\begin{abstract}
	In a collectivised pension fund, investors agree that any money remaining in the fund
	when they die can be shared among the survivors.
	
	We give a numerical algorithm to compute the optimal investment-consumption
	strategy for an infinite collective of identical investors with exponential Kihlstrom--Mirman preferences,
	investing in the Black--Scholes market in continuous time but consuming in discrete time.
	Our algorithm can also be applied to an individual investor.
	
	We derive an analytic formula for the optimal consumption in the special case of an individual who chooses not to invest in the financial markets. We prove that our problem formulation for a fund with an infinite
	number of members is a good approximation to a fund with a large, but finite number of members.
\end{abstract}

\section*{Introduction}

In a collectivised pension fund, a group of $n$ individuals agree to invest together and to share
any wealth remaining among the survivors when one of the individuals dies. We showed
how to model such funds in \cite{ab-main}. In particular we showed how one can often effectively
reduce the problem of managing a heterogeneous fund of diverse individuals to a series of 
optimization problems for homogeneous funds of identical individuals, thus in this
paper we may safely focus on the management of a homogeneous fund.
The main result of this paper is a numerical method to find the optimal investment
and consumption strategy under certain conditions, namely:
\begin{enumerate}[(i)]
	\item The number of individuals is either $n=1$ or $n=\infty$. These are the most
	important cases for applications, since we will also show that the case for large $n$ is well approximated by the case $n=\infty$.
	\item The fund is investing in a Black--Scholes--Merton market. This is the simplest interesting case.
	\item The individuals preferences are modelled using either
	exponential Kihlstrom--Mirman
	utility functions (see Section \ref{sec:model} for a definition),
	or a deterministically time-varying generalization of exponential Kihlstrom--Mirman preferences.
\end{enumerate} 

As is discussed in detail in \cite{ab-main}, we believe such preferences provide a particularly
good  model for pension investment. For example, one attractive property is that 
exponential Kihlstrom--Mirman preferences remain essentially constant over time: in the terminology of \cite{ab-main} they are {\em stationary} (other
authors sometimes call this property time-consistency). As well as such theoretical properties,
exponential preferences are flexible enough to allow one to model the concepts of
``satiation'', ``risk-aversion'' and ``pension adequacy''.

A standard approach to solving optimal control problems is to work in continuous
time. One then writes down the Hamilton--Jacobi--Bellman (HJB) equation for the problem.
If one is lucky, one can solve this analytically and use this to find the optimal control.
More typically, this is not possible and one must solve the problem numerically. However,
the HJB equation will still provide a guide to developing the appropriate numerical method.

In Section \ref{sec:hjb} we derive the HJB equation for our problem. This gives
a PDE with two space variables and one time variable. However, because exponential preferences
are stationary, they possess a symmetry which allows us to reduce the HJB equation
to a PDE with only one space variable. We are able to solve
this analytically only in the special case where the market provides no return and where $n=1$.
This corresponds to a single investor who hides their life-savings under their mattress, and
gradually spends their savings until they die. We derive this analytic formula in Section \ref{sec:mattress}. Even in this special case we have to resort to using some special functions
to write our formulae.

We see that in general we must use a numerical method. One could apply existing numerical methods
to the HJB equation, for example the methods of \cite{kushnerDupuis}. However, as this has two space dimensions the resulting method would be rather slow.

To avoid this problem, one could attempt to find an existing numerical method for the dimension-reduced equation. The difficulty with this approach is that while the resulting equation looks very similar to an HJB equation, it does not arise directly as the HJB equation of a stochastic control problem. The numerical methods we could find are only designed for equations arising from such control problems. For example, if one were to naively attempt to use the methods of \cite{forsythLabahn} to solve the reduced equation
(ignoring the fact that the requirements they impose on the coefficients do not hold
for our problem), the resulting scheme would be unstable.
Nevertheless, we believe that it should be feasible to apply the ideas behind standard numerical methods in stochastic control to our problem in a way that takes account of the symmetry, and so leads to an efficient numerical scheme.

However, the form of our problem allows for a simpler,
direct approach, so long as we are willing to consider consumption in discrete time and investment
in continuous time. We will see that in this case one can reduce the one-period problem to a line-search problem. This means
that if we know the value function of our optimization problem at time $t$ we can easily approximate
the value function at time $t-\delta t$. We may then proceed inductively to compute the optimal
fund management scheme.  We develop this scheme and prove its convergence in Section \ref{sec:numericalMethod}. We will see that our scheme in fact computes a lower bound for the value function of our problem. This gives a small practical advantage over standard schemes for control problems which can only guarantee that they approximate the value function. However, from a theoretical point of view, this is a significant advantage as it guarantees the stability of our scheme.

In Section \ref{sec:numericalTests} we provide some numerical calculations that validate our results and
our software implementation by comparing with the analytic formula of Section \ref{sec:mattress}.

In Section \ref{sec:exponentialConvergence} we prove that the value function for the
optimal investment problem with $n$ individuals converges to the value function for the
case of $n=\infty$ individuals as $n \to \infty$. This provides a rigorous justification for 
considering the model with $n=\infty$ as an approximate model for a large fund. Moreover,
this is a key ingredient required to show that the investment strategy for heterogeneous
funds described in \cite{ab-main} will be an effective strategy for large fund sizes.

\section{The optimization model}
\label{sec:model}

In this section we review the optimization model for a homogeneous fund given in \cite{ab-main}
in the case of Kihlstrom--Mirman preferences.

Let the set ${\cal T}$ be either
\begin{enumerate}[(i)]
\item a discrete time grid $\{0, \delta t, 2 \delta t, \ldots T-\delta t\}$
      for a fixed grid size $\delta t$ and a terminal time $T$ at which all investors
      are assumed dead;
\item an interval $[0,T)$ with $T \in (0,\infty]$.
\end{enumerate}      
For the case of discrete ${\cal T}$, let $\DT$ be the measure on ${\cal T}$ given by
a Dirac mass at each point of ${\cal T}$. Otherwise let ${\cal T}$ be the Lebesgue measure.

Let the pension outcome $(\gamma, \tau)$ be a pair consisting of a non-negative stochastic cashflow process $(\gamma_t)_{t \in {\cal T}}$ and a random variable $\tau$ taking values in ${\cal T}$ representing an individual's time of death.
We wish to define a preference relation $\preceq$ on such pairs $(\gamma, \tau)$ that describes
the individuals preferences over possible cashflow--mortality outcomes. 
Our convention is that any consumption up to {\em and including} time $\tau$ may effect
the individual's preferences, but any consumption occurring after time $\tau$ will be ignored.

For the purposes
of this paper we will assume that the individual's preferences
are determined by a {\em gain function} of the form
\begin{equation}
{\cal J}(\gamma, \tau) =  \E\left( w \left( \int_0^\tau e^{-bt} u_t( \gamma_t )\, \DT \right) \right).
\label{eq:kihlstromMirmanGain}	
\end{equation}
where $u_t:\R_{\geq 0} \to \R$ is a deterministic time-dependent choice of
concave, increasing utility function,
$w:\R \to \R$ is concave and increasing and $b\in \R$ is a choice of discount rate.
We may then say that the individual prefers outcomes, $(\gamma,\tau)$, that yield higher
values for the gain function, so $(\gamma, \tau) \preceq (\gamma^\prime, \tau^\prime)$ if and only if ${\cal J}(\gamma, \tau) \leq {\cal J}(\gamma^\prime, \tau^\prime)$.

In the case that $u_t$ is time-independent these preferences are called Kihlstrom--Mirman
preferences with mortality after \cite{kihlstromMirman}. If in addition $w(x)=x$ these preferences are called von Neumann--Morgernstern preferences after \cite{vonNeumannMorgernstern}.
If $u_t$ is time-independent $w(x)=-\exp(-x)$ and additionally $b=0$ these preferences
are called exponential Kihlstrom--Mirman preferences with mortality.

Exponential Kihlstrom--Mirman preferences with mortality have a number of attractive properties
which are described in detail in \cite{ab-main}. For example, in the terminology of that paper,
the preferences are {\em stationary} as they do not vary with time. This property is often called time-consistency in the literature. Stationarity is a mathematically attractive property, but
there may be good economic reasons to consider non-stationary preferences. For example, if $\gamma_t$
represents the income from a private pension, and individuals also receive a deterministic state pension which increases in real terms over time, then one might use non-stationary preferences to reflect the reduced need for private pension income as one ages. For this reason it can be useful
to consider the case of time-dependent utility functions $u_t$, in which case one may assume,
without loss of generality, that $b=0$.

\medskip

We wish to find the optimal investment strategy for a collective of $n$ identical individuals
with preferences given by a gain function of the form \eqref{eq:kihlstromMirmanGain}. We assume that the fund
is able to invest in a Black--Scholes--Merton market. This market consists of one
riskless bond with interest rate $r$ and one risky asset $S_t$ which follows the SDE
\[
\ed S_t = S_t(\mu \, \ed t + \sigma \, \ed W_t ), \qquad S_0.
\]
Here $\mu$ and $\sigma$ are constants and $W_t$ is a Brownian motion. We write $(\Omega^M, {\cal F}^M,{\cal F}^M_t, \P^M)$ for the filtered probability space generated by $W_t$.

\medskip

We assume that we have
independent identically distributed random variables $\tau_i$ each representing the time of death of individual $i$. These are assumed to have distribution given by
\begin{equation}
p_t \, \DT.
\label{eq:defpt}
\end{equation}
We will 
write $(\Omega^{L},{\cal F}^{L}, {\cal F}^{L}_t, \P^L)$ for the filtered probability
space generated by the $\tau_i$, the filtration is obtained by requiring that each $\tau_i$ is a stopping time. We will write $F_\tau(t)$ for the distribution function of $\tau$.

We will write $n_t$ for the number of survivors at time $t \in {\cal T}$, that is the number of individuals whose time of death is  greater than or equal to $t$.
This convention ensures that $n_0=n$ and works well with our convention that cashflows
received at the time of death are still consumed.
Note, however, that $n_{t+\delta t}$ will be ${\cal F}^L_t$ measurable.

\medskip

Having selected the gain function for each individual, we must select a gain function
for the fund. We require that the fund is managed such that all surviving individuals receive
the cashflow $\gamma_t$ at times $t \in {\cal T}$. 

In the case that $n$ is finite,
we define a discrete uniformly distributed random variable $\iota$ 
which takes values in $\{1, \ldots, n\}$.
We write $(\Omega^\iota, \sigma^\iota, \P^\iota)$ for the probability space generated by $\iota$. We define a filtration ${\cal F}^\iota_{t \in {\R^+ \cup \{ \infty \}}}$ by
\[
{\cal F}_t^\iota = \begin{cases}
\{ \Omega^\iota, \emptyset \} & t < \infty \\
\sigma^\iota & t= \infty.
\end{cases}
\]
This ensures that $\Omega^\iota$ represents a random choice of individual made at time $\infty$. We take $\Omega = \Omega^M \times \Omega^L \times \Omega^\iota$ 
equipped with the product filtration ${\cal F}_t$ and product measure $\P$.
We then define the gain function for our fund to be
\begin{equation}
{\cal J}^D(\gamma):={\cal J}_\iota(\gamma^\iota, \tau_\iota).
\label{eq:distributionObjective}
\end{equation}

In the case that $n=\infty$ we take
\begin{equation}
{\cal J}(\gamma):={\cal J}_1(\gamma, \tau_1).
\label{eq:robustObjectiveInfinite}
\end{equation}

\medskip

Let $Y_t$ represent the fund
value per individual at time $t$ before consumption or mortality,
and let $\overline{Y}_t$ represent the fund value per individual after consumption.

Our control problem will be to choose a consumption rate $\gamma_t$ and
a proportion $\alpha_t$ to invest in stocks at each time to maximize the gain
of the fund. Let us write down the dynamics of the fund value.

In the continuous time case we have
\begin{equation}
\ed Y_t = Y_t( ((1-\alpha_t)r + \alpha \mu ) \ed t + \alpha_t  \sigma \, \ed W_t )
- \pi_t \gamma_t \, \ed t, \qquad Y_0 = X_0
\label{eq:fundValuePerIndividualCts}
\end{equation}
where $\pi_t$ is the proportion of individuals surviving to time $t$
and $X_0$ is the budget for each individual.
Note that in the case $n=\infty$, $\pi_t=1-F_\tau(t)$ is deterministic.

In the 
discrete time case we have
\begin{equation}
\begin{split}
Y_t &= \begin{cases}
X_0 & \text{t = 0} \\
\lim_{h\to 0+}\overline{Y}_{t-h} &  t \in {\cal T}\setminus\{0\} \\
\overline{Y}_t & \text{otherwise.}
\end{cases} \\
\overline{Y}_t &=
\begin{cases}
Y_t - \pi_t \gamma_t  & t \in {\cal T} \\
\overline{Y}_{t^\prime} +
\int_{t^\prime}^t \overline{Y}_s (((1-\alpha_s)r + \alpha_s \mu) \ed s 
+ \alpha_s \sigma \, \ed W_s )
& t^\prime \in {\cal T} \text{ and } t^\prime\leq t < t^\prime+\delta t.
\end{cases}
\end{split}
\label{eq:fundValuePerIndividual}
\end{equation}

We define an admissible control
to be a progressively measurable process $(\gamma,\alpha)$ such that $Y_t \geq 0$
and $\overline{Y}_t \geq 0$  for all time. We write ${\cal A}$ for the set of admissible controls.

Our objective is to compute
\begin{equation}
v_n = 
\sup_{(\gamma,\alpha) \in {\cal A}} {\cal J}(\gamma),
\label{eq:fundObjective}
\end{equation}
and to find $(\gamma, \alpha)$ achieving (or if necessary, approximating) this supremum.

We note that the problem for the case $n=\infty$ is designed to model
a large finite fund, but to justify this we will need to prove convergence of $v_n\to v_\infty$
as $n\to \infty$. This is done in Section \ref{sec:exponentialConvergence} under mild
assumptions.

\section{The HJB equation}
\label{sec:hjb}

In this section we consider the fully continuous time problem with ${\cal T}=[0,T)$
and $T$ finite.
We will restrict attention to the cases $n=1$ and $n=\infty$. To keep
track of this choice we define a constant $C$
\[
C = \begin{cases}
0 & n=1 \\
1 & n=\infty.
\end{cases}
\]
We will also assume that the function $u_t$ in \eqref{eq:kihlstromMirmanGain} does not
depend upon $t$ and so will write $u_t=u$.

\bigskip

We wish to show that our problem can be reduced to a 
controlled diffusion problem of the type considered in \cite{pham}. Let
us briefly review the elements of this theory that we need.

We
assume we are given a general controlled vector SDE 
\begin{equation}
\ed \bm{X}_s = \bm{b}( \bm{X}_s, \bm{\alpha}_s ) \, \ed s
+ \bm{\sigma}( \bm{X}_s, \bm{\alpha}_s ) \, \ed \bm{W}_s 
\label{eq:generalControlledSDE}.
\end{equation}
We write $\bm{X}_t^{t_0,\bm{X}_{t_0}}$ for the solution of this SDE
with the initial condition $\bm{X}_{t_0}$ at time $t_0$.
We wish to choose a progressively measurable control
$\bm{\alpha}_s$ to maximize our gain function
\begin{equation}
J(t_0,\bm{X}_{t_0}, \bm{\alpha}) := \E( \int_{t_0}^T f(s,\bm{X}_s^{t_0,\bm{X}_{t_0}}) \, \ed s)
\label{eq:gainHJB}
\end{equation}
for some measurable function $f$. Let ${\cal A}$ denote the
set of progressively measurable controls for which this gain function is finite. We define the value function by
\[
v(t_0,\bm{X}_{t_0}) = \sup_{\bm{\alpha} \in {\cal A}} J(t_0, \bm{X}_{t_0}, \bm{\alpha}).
\]
The problem one seeks to solve is to find the value function, and if it exists, the
optimal control $\bm{\hat{\alpha}}$ which achieves the supremum.  We define the diffusion operator ${\cal L}^{\bm a}$ for the diffusion process \eqref{eq:generalControlledSDE} for the control constant $\bm{a}$ by
\[
{\cal L}^{\bm{a}} {v} = \bm{b}(\bm{X},\bm{a}) \cdot
\nabla v + \frac{1}{2} \tr ( \sigma(\bm{X},\bm{a})
\sigma(\bm{X},\bm{a})^\top \nabla^2 v ).
\]
The HJB equation is then defined to be
\begin{equation}
\frac{\partial v}{\partial t}  + \sup_{a} \{ {\cal L}^a v(t, \bm{x}) + f(t,\bm{x}) \} = 0.
\label{eq:hjbGeneral}
\end{equation}
If we can find a solution to this equation with terminal condition
\[
v(T,\bm{x})=0
\]
then the {\em verification theorem} (Theorem 3.5.2 of \cite{pham}) can be used to  demonstrate that, under suitable conditions, $v$ is the value function of the control problem given by the SDE \eqref{eq:generalControlledSDE} with gain function given by \eqref{eq:gainHJB}.

\bigskip

Returning to our specific problem of investment with Kihlstrom--Mirman
preferences, our gain function can be written as
\begin{align}
{\cal J}( \gamma ) &=
\E_\P \left( w \left( \int_0^{\tau_\iota} e^{-bs} u( \gamma_s) \, \ed s \right) \right)  \nonumber \\
&= \E_{\P^M} \left( \int_0^T p_\tau w\left( \int_0^\tau 
e^{-bs} u(\gamma_s ) \ed s  \right)\, \ed \tau \right).
\label{eq:desiredGain}
\end{align}
where $p_t$ was defined in \eqref{eq:defpt}.
Given a progressively measurable control $(\gamma,a)$,
we define processes $U_t$ and $Y_t$ by
\begin{equation}
\begin{split}
U_t &= \int_{t_0}^{t} e^{-bs} u(\gamma_s) \, \ed s + U_{t_0} \\
\ed Y_t &= ((a_t \mu + (1-a_t)r) Y_t - \gamma_t (1-F_\tau(t)^C) \, \ed t + \sigma a_t Y_t \,  \ed W_t, \quad Y_{t_0}.
\label{eq:makeMarkovian}
\end{split}
\end{equation}
The process $U_t$ maintains the state necessary to turn our investment problem into a Markovian problem. The process $Y_t$ corresponds to the fund value per individual defined in the previous section.

We define a vector process $\bm{X}_t$ by
\[
\bm{X}_t = (U_t,Y_t)
\]
Let us define the gain function for controlled diffusion problem with 
initial condition $(U_{t_0}, Y_{t_0})$ to be
\begin{equation}
J_{t_0,(U_{t_0}, Y_{t_0}),(\gamma,a)} = \E_{\P^M}
\left( \int_{t_0}^T f(s,\bm{X}_s) \ed s \right)
\label{eq:gainWithU}
\end{equation}
with
\[
f(t,(U,Y)) = p_t w(U).
\]
This coincides with our the gain function \eqref{eq:desiredGain} when $t_0=0$ and $U_{t_0}=0$. We have now written our problem in the desired form.

We can now write down the HJB equation.
\begin{multline}
\frac{\partial v}{\partial t} + \sup_{a,\gamma} \Big\{
e^{-bt} u(\gamma) \frac{\partial v}{\partial U} + \\
((a \mu + (1-a) r)Y - \gamma (1-F_\tau(t))^C ) \frac{\partial v}{\partial Y}
+ \frac{1}{2} \sigma^2 a^2 Y^2 \frac{\partial^2 v}{\partial Y^2}
+ p_t w(U) \Big\} = 0
\label{eq:hjbFull}
\end{multline}

As we have seen, this has two space variables because we need an extra space variable to address the non-Markovianity of Kihlstrom--Mirman preferences.

In the case of exponential preferences, we expect to be able
to reduce the dimension as the preferences will be Markovian.
So let us now assume that $w(x)=-\exp(-x)$. We now
have the following symmetry of our gain function:
\[
J_{t_0,(U+c, Y)}(\gamma) = e^{-c} J_{t_0,(U, Y)}(\gamma).
\]

Also note that scaling of the value function $v$ at intermediate
times $t_0$ is not particularly natural. If we define
$\tilde{v}$ to be the value of the optimal investment strategy starting at time $t_0$ with investment $Y_{t_0}$ then we will have
\begin{equation}
\tilde{v}(t_0,Y_{t_0}):= \sup_{(\gamma_t,a_t) \in {\cal A}} \E_{\P^\iota} \left(
w \left( \int_{t_0}^{\tau_\iota} e^{-bs} u(\gamma_s) \ed s \right)
\mid \tau_\iota \geq t_0 \right).
\label{eq:defvtilde}
\end{equation}
We see that
\[
\tilde{v}(t,Y) = \frac{1}{1-F_\tau(t_0)} v(t_0,(0,Y)).
\]

Combining our observations on the symmetry of the gain function with the
scaling behaviour of $v$ motivates the definition
\[
v(t,(U,Y)) = (1-F_\tau(t)) e^{-U} \hat{v}(t,Y)
\]
for some function $\hat{v}(t,Y)$. Substituting this into
the full HJB equation \eqref{eq:hjbFull}, we get the dimension
reduced HJB equation
\begin{multline}
\frac{\partial \hat{v}}{\partial t} + \sup_{a,\gamma} \Big\{
-\lambda(t) \hat{v}
- e^{-bt} u(\gamma) \hat{v}  - \lambda(t) \\
+ ((a \mu + (1-a) r)Y - \gamma (1-F_\tau(t))^C ) \frac{\partial \hat{v}}{\partial Y}
+ \frac{1}{2} \sigma^2 a^2 Y^2 \frac{\partial^2 \hat{v}}{\partial Y^2}
\Big\}=0
\label{eq:hjbreduced}
\end{multline}
where $\lambda(t)$ is the force of mortality
\[
\lambda(t) = \frac{p_t}{1-F(t)}.
\]

In summary, if we can find a smooth, non-positive, solution to the reduced HJB equation \eqref{eq:hjbreduced} with terminal condition $\hat{v}=-1$,
it will give rise to a smooth solution of the full HJB equation \eqref{eq:hjbFull}. One can then apply the verification theorem to show that this must be equal to the value function of the full problem.
Hence $\tilde{v}=\hat{v}$ where $\tilde{v}$ is the value function
defined in equation \eqref{eq:defvtilde}.

\medskip

We also expect a dimension reduction on the case of von Neumann-Morgernstern preferences. So we now assume $w(x)=x$.
We now
have the following symmetry of our gain function:
\[
J_{t_0,(U+c, Y)}(\gamma) = 
\int_{t_0}^T c p_s \, \ed s + J_{t_0,(U, Y)}(\gamma)
= (1-F_\tau(t)) c + J_{t_0,(U, Y)}(\gamma).
\]
Hence we may write the value function as
\[
v(t,(U,Y)) = (1-F_\tau(t)) U + \hat{v}(t,Y)
\]
for some function $\hat{v}(t,Y)$.  Substituting this into
the full HJB equation \eqref{eq:hjbFull}, we get the dimension
reduced HJB equation
\begin{multline}
\frac{\partial \hat{v}}{\partial t} + \sup_{a,\gamma} \Big\{
e^{-bt} u(\gamma) (1-F(t))   \\
+ ((a \mu + (1-a) r)Y + \gamma (1-F_\tau(t))^C ) \frac{\partial \hat{v}}{\partial Y}
+ \frac{1}{2} \sigma^2 a^2 Y^2 \frac{\partial^2 \hat{v}}{\partial Y^2}
\Big\}=0.
\label{eq:hjbreducedVNM}
\end{multline}

It is interesting to note that in the individual case, $C=0$,
the discount factor $e^{-bt}$ and the life-expectancy
term $(1-F(t))$ both appear only as coefficients of $u(\gamma)$.
Thus the solution for the individual problem with no discount 
factor and an exponential mortality distribution with force of
mortality $\lambda$ is equivalent to the individual problem with
no mortality, but with a discount rate of $\lambda$. 
This provides some
evidence that it may not be appropriate to use discounting in a
model which already features mortality. This is an important
point in deciding whether to use exponential Kihlstrom--Mirman
preferences or homogeneous Epstein--Zin preferences, as is discussed
in detail in \cite{ab-main}.

\section{A special case}
\label{sec:mattress}

In general for exponential Kihlstrom--Mirman preferences
we expect that we will need to solve the HJB
equation \eqref{eq:hjbreduced} numerically. However, it will be helpful to have
an analytical solution in a special case, if only to verify
the accuracy of our numerical solution.

If we take the case $n=1$ and assume that mortality is given by
an exponential distribution with constant force of mortality $\lambda$,
then our problem will be translation invariant in time. We introduced
the variable $\hat{v}$ earlier to preserve this symmetry, so we may
therefore look for a steady state solution to the HJB equation
satisfying $\frac{\partial \hat{v}}{\partial t}=0$. The HJB equation
will then become an ordinary differential equation (ODE).

The resulting ODE will still not be particularly simple, so we assume further that $\mu=r=b=0$.
This implies that the optimal solution will satisfy $a=0$. This is intuitively
clear, a rigorous argument for the corresponding result for von Neumann Morgernstern utility
is given in \cite{ab-main} and the same argument can be applied to this case. Making the appropriate simplifications to \eqref{eq:hjbreduced} we obtain:
\begin{equation}
\sup_\gamma \left\{ -\lambda \hat{v} -\lambda - u(\gamma) \hat{v} -  \, \gamma \hat{v}^\prime \right\} = 0.
\label{eq:hjb1}
\end{equation}
Technically one should derive this ODE from first principles using the
techniques of Section \ref{sec:hjb}
in the infinite horizon case, but doing so is unrevealing.

Our final specialization is to assume
\begin{equation}
u(x) = a \, x^k + c.
\label{eq:poweru}
\end{equation}
with $0 < k < 1$.

Although these assumptions are very restrictive, this case has a certain charm. It
represents an investor with exponential preferences who eschews the financial system
and keeps their money under the mattress, deciding only upon when to consume.

Let $\gamma^*$ denote the value of $\gamma$ achieving the supremum in equation \eqref{eq:hjb1}. We find:
\[
u^\prime(\gamma^*) = - \frac{\hat{v}^\prime}{\hat{v}}
\]
So
\begin{equation}
a k (\gamma^*)^{k-1}= -\frac{\hat{v}^\prime}{\hat{v}}.
\label{eq:gammaStar1}
\end{equation}
Substituting this into \eqref{eq:hjb1} and using our expression \eqref{eq:poweru} for $u$, we find
\[
a (1-k) (\gamma^*)^k+c+\lambda +\frac{\lambda }{\hat{v}}=0.
\]
We may solve this to find that $\gamma^*$ satisfies
\begin{equation}
\gamma^* = \left(\frac{c \hat{v}+\lambda +\lambda  \hat{v}}{a (k-1) \hat{v}}\right)^{1/k}.
\label{eq:gammaInTermsOfV}
\end{equation}
Substituting into \eqref{eq:gammaStar1} we have
\begin{equation}
ak \left(\frac{c \hat{v}+\lambda +\lambda  \hat{v}}{a (k-1) \hat{v}}\right)^{\frac{k-1}{k}} = -\frac{\hat{v}^\prime}{\hat{v}}.
\label{eq:forv}
\end{equation}
To solve this ODE, we need to consider the initial condition. We know
that if our initial wealth is $0$, consumption will be zero at all times. Our choice of utility \eqref{eq:poweru} then ensures
that
\[
\hat{v}(0)= \E\left( -\exp\left(-\int_0^\tau c \, \ed t \right)\right) =
-\int_0^\infty e^{-cs} \lambda e^{-\lambda s} \, \ed s =  -\frac{\lambda}{c+ \lambda}.
\]

To simplify \eqref{eq:forv} and this initial condition, we define
\begin{equation}
w := c\hat{v} + \lambda + \lambda \hat{v}
\label{eq:defw}
\end{equation}
and rewrite \eqref{eq:forv} as
\[
\left(
\frac{(1-k)^{1-\frac{1}{k}} a^{-1/k} w^{-\frac{k-1}{k}} (c+\lambda )^{\frac{1}{k}-1} (\lambda -w)^{-1/k}}{k}
\right) \frac{\ed w}{\ed x} = 1.
\]
Hence
\begin{align*}
x &= \int_0^w \left(
\frac{(1-k)^{1-\frac{1}{k}} a^{-1/k} W^{-\frac{k-1}{k}} (c+\lambda )^{\frac{1}{k}-1} (\lambda -W)^{-1/k}}{k} \ed W
\right) \\
&= (1-k)^{1-\frac{1}{k}} a^{-1/k} \lambda ^{-1/k} w^{1/k} (c+\lambda )^{\frac{1}{k}-1} \, _2F_1\left(\frac{1}{k},\frac{1}{k};1+\frac{1}{k};\frac{w}{\lambda }\right).
\end{align*}
In this expression $_2F_1$ is the Gauss hypergeometric function
and we have simply used a standard integral identity for this function.

Using our definition for $w$ we find the following implicit equation for $\hat{v}$
\begin{multline}
x = 
(1-k)^{1-\frac{1}{k}} a^{-1/k} \lambda ^{-1/k} (c+\lambda )^{\frac{1}{k}-1} (\hat{v} (c+\lambda )+\lambda )^{1/k} \\
\times \, 
_2F_1\left(\frac{1}{k},\frac{1}{k};1+\frac{1}{k};\frac{\lambda +\hat{v} (c+\lambda )}{\lambda }\right).
\label{eq:analyticexpressionforv}
\end{multline}

We would also like to compute how the consumption $\gamma^*$ varies over time. To do this note that the wealth $x$ satisfies
\[
\frac{\ed x}{\ed t} = -\gamma.
\]
It follows that
\[
\frac{ \ed \hat{v}}{\ed t} = -\gamma \frac{\partial \hat{v}}{\partial x}.
\]
This allows us to rewrite \eqref{eq:gammaStar1} as
\[
a k \gamma^{k-1}=\frac{\frac{\ed \hat{v}}{\ed t}}{\gamma  \hat{v}}.
\]
Hence from \eqref{eq:forv} we find
\[
\frac{\ed \hat{v}}{\ed t} = \frac{k (c \hat{v}+\lambda +\lambda  \hat{v})}{k-1}
\]
In terms of $w$
\[
\frac{\ed w}{\ed t}= (c+\lambda)\frac{k}{k-1} w.
\]
Solving this, we obtain the dynamics for $w$ or equivalently $\hat{v}$. The initial condition at $t=0$ is $w=w_0$,
which gives
\[
\log(w) - \log(w_0)=  (c+\lambda)\frac{k}{k-1} t
\]
Then using
\eqref{eq:gammaInTermsOfV} we obtain the dynamics of the optimal consumption $\gamma^*$. The result is
\begin{equation}
\gamma^*_t = 
(1-k)^{-1/k} a^{-1/k} \lambda ^{-1/k} (c+\lambda )^{1/k} \left(w_0 e^{\frac{k t (c+\lambda )}{k-1}}\right)^{1/k} \left(1-\frac{w_0 e^{\frac{k t (c+\lambda )}{k-1}}}{\lambda }\right)^{-1/k}.
\label{eq:analyticexpressionforgamma}
\end{equation}

We summarize our results.
\begin{proposition}
	For the investment problem for:
	\begin{enumerate}[(i)]
		\item a single individual;
		\item in a market with $\mu=r=0$;
		\item with an exponential mortality with intensity $\lambda$;
		\item with exponential Kihlstrom--Mirman preferences with $u$ given by \eqref{eq:poweru} and $b=0$;
	\end{enumerate}
	the value function $\hat{v}$ at time $0$ satisfies \eqref{eq:analyticexpressionforv} where $x$ is the initial wealth.
	Defining $w$ by \eqref{eq:defw}, the optimal consumption $\gamma^*$
	satisfies \eqref{eq:analyticexpressionforgamma}.	
\end{proposition}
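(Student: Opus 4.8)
The plan is to assemble the computation already carried out in this section into a rigorous argument in three stages: reduce the control problem to the scalar ODE \eqref{eq:hjb1}, solve that ODE in closed form, and invoke a verification theorem to identify its solution with the value function.

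First I would specialize the dimension-reduced HJB equation \eqref{eq:hjbreduced}. Setting $C=0$ (the case $n=1$), $\mu=r=b=0$, and taking the force of mortality to be the constant $\lambda$ of an exponential distribution, the problem becomes translation-invariant in time; since $\hat{v}$ was defined precisely to respect this symmetry, I would seek a steady-state solution with $\partial_t \hat{v}=0$. With $\mu=r=0$ the risky asset offers no advantage, so the optimal allocation is $a=0$; this mirrors the von Neumann--Morgenstern argument of \cite{ab-main}, which applies verbatim here. These reductions turn \eqref{eq:hjbreduced} into the ODE \eqref{eq:hjb1}.

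Next I would carry out the pointwise optimization and the integration. Differentiating the bracket in \eqref{eq:hjb1} with respect to $\gamma$ gives the first-order condition $u'(\gamma^*)=-\hat{v}'/\hat{v}$, which for the power utility \eqref{eq:poweru} is \eqref{eq:gammaStar1}. Substituting the optimizer back into \eqref{eq:hjb1} and solving for $\gamma^*$ yields \eqref{eq:gammaInTermsOfV}, and re-inserting this into \eqref{eq:gammaStar1} produces the separable first-order ODE \eqref{eq:forv} for $\hat{v}(x)$. The boundary datum is fixed by noting that zero initial wealth forces $\gamma\equiv 0$, whence $u(\gamma)\equiv c$ and the gain reduces to $\E(-\exp(-c\tau))=-\lambda/(c+\lambda)$, giving $\hat{v}(0)=-\lambda/(c+\lambda)$. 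Introducing $w$ as in \eqref{eq:defw} separates the variables, and the resulting integral is evaluated by one of the standard Euler-type integral representations of ${}_2F_1$; unwinding the substitution gives the implicit formula \eqref{eq:analyticexpressionforv}. For the consumption dynamics I would use that wealth decays along the optimal path as $\ed x/\ed t=-\gamma$, which converts \eqref{eq:forv} into the linear ODE $\ed w/\ed t=(c+\lambda)\tfrac{k}{k-1}w$; integrating from $w_0$ and substituting into \eqref{eq:gammaInTermsOfV} yields \eqref{eq:analyticexpressionforgamma}.

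The main obstacle is not any single calculation but the rigorous identification of the closed-form $\hat{v}$ with the value function. Equation \eqref{eq:hjb1} was obtained only heuristically, by passing from the finite-horizon reduced HJB equation to a stationary regime. To make this precise I would set up the infinite-horizon stationary control problem directly, as indicated in Section \ref{sec:hjb}, and apply the stationary analogue of the verification theorem (Theorem 3.5.2 of \cite{pham}). The delicate hypotheses to check are that the constructed $\hat{v}$ is smooth and non-positive, that the candidate control $\gamma^*$ is admissible with wealth remaining non-negative, and that the transversality and integrability conditions hold so that no boundary term survives in the limit. Granting these, the verification theorem certifies that $\hat{v}$ is the value function and $\gamma^*$ is optimal, completing the proof.
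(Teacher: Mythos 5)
Your proposal follows the paper's own derivation essentially step for step: the same specialization of \eqref{eq:hjbreduced} to the stationary ODE \eqref{eq:hjb1} with $a=0$, the same first-order condition and substitution $w=c\hat v+\lambda+\lambda\hat v$ leading to the implicit ${}_2F_1$ formula \eqref{eq:analyticexpressionforv}, and the same passage from $\ed x/\ed t=-\gamma$ to the linear ODE for $w$ yielding \eqref{eq:analyticexpressionforgamma}. The only place you go beyond the paper is the verification step: the paper explicitly declines to derive \eqref{eq:hjb1} rigorously from the infinite-horizon problem (``doing so is unrevealing''), whereas you correctly flag this as the delicate point and sketch how a stationary verification theorem would close it.
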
	

The one new theoretically interesting observation that has emerged is that the adequacy level will affect the value function and the consumption.

One can also obtain an analytic formula if one takes $k<0$ and $a<0$ in \eqref{eq:poweru}, the difference being that one should now take $x=\infty$ as the boundary condition to ensure that $u(x)=0$.

\section{Numerical solution for discrete time consumption}
\label{sec:numericalMethod}

We describe a numerical method for solving the problem when consumption takes
place in discrete time but investment occurs in continuous time.

\subsection{Solution to the one period problem}

We will be interested in solving the investment problem given by
\eqref{eq:fundObjective} in the cases when $n=1$ and $n=\infty$.
As before, we define $C=0$ when $n=1$, and $C=1$ when $n=\infty$. Our numerical
method extends straightforwardly to the case when the function $u_t$
in \eqref{eq:kihlstromMirmanGain}
depends on both consumption and time, but to ease notation we will
only give the formulae for time-independent $u_t=u$.

Write ${\cal A}_{X,t}$ for the admissible consumption-investment
strategies that start with wealth $X$ at time $t$.
Define the value function $v$, as
a function of initial wealth, $X$ at time  $t_1 \in {\cal T}$ by
\[
v_{t}(X):=\sup_{{\gamma,\alpha} \in {\cal A}_{X,t}}
\E\left(-\exp\left(- \int_{t}^\tau u(\gamma_s) \, \DS \right)
\mid \tau \geq t
\right).
\]

Given $v_{t}(X)$, we wish to compute $v_{t-\delta t}(X)$, we will then be able
to recursively compute $v_{t}$ for all $t \in {\cal T}$. Our next theorem
shows how to compute $v_{t-\delta t}(X)$, but
in order to state our results concisely we first make the following definitions.
\begin{definition}
	Let $u:\R \to \R \cup \{ \pm \infty \}$ be concave and increasing. Define
	\[
	u^\dagger(p):\R_{>0} \to \R
	\]
	by 
	\[
	u^\dagger(p) = \inf \{ x \mid p \in \partial u(x) \}
	\]
	where $\partial u(x)$ is the sub-differential of $u$ at $x$.
	\label{def:dagger}	
\end{definition}

For sufficiently regular
functions $u$, we have $u^\dagger = (u^\prime)^{-1}$, or, equivalently, $u^\dagger$
is the derivative of the Legendre transform of $u$. Since the Legendre transform of $u$
naturally arises if we apply convex duality to our problem, $u^\dagger$ will also arise
naturally from the Kuhn--Tucker conditions for the dual problem.

\begin{definition}
Define
\begin{equation}
Q(u):=\Phi\left( M  + \Phi^{-1}(u) \right),
\label{eq:defQ}
\end{equation}
$\Phi$ is the cumulative distribution function of the
standard normal distribution and
\[
M:= \frac{\left| \mu-r \right|\sqrt{\delta t}}{\sigma }.
\]
Define
\begin{equation}
q^A_{\BS}(u)= \frac{\ed Q}{\ed u}.
\end{equation}
\end{definition}

As we will show in Lemma \ref{lemma:bsmCanonicalForm} below, the quantity $q^A_{\BS}(u)$ can be related to the pricing kernel of the Black--Scholes model.

We may now state the following result which allows us to solve the one period problem.

\begin{proposition}
	Suppose that $t_1=t_0+\delta t$ and that  $v(X):=v_{t_1}(X)$ is known, concave and increasing
	for $X>0$, equal to $-\infty$ for $X\leq 0$,
	and satisfies $v(X) \leq 0$.
	\begin{enumerate}[(a)]
	\item 	\label{prop:solutionExponentialA} $v_{t_0}(X)$ is itself concave and increasing for $X>0$, equal to $-\infty$ for $X \leq 0$
	and satisfies $v(X) \leq 0$.	
	\item \label{prop:solutionExponentialB} For each $\eta>0$ define a function on 
	$f^\eta:(0,1)\to \R_{\geq0}$ by
	\begin{equation}
	f^\eta(s) = v^\dagger \left( 
	\eta  s^{C-1}_{t_0} e^{-r \delta t} q^A_{\BS}(s) \right).
	\label{eq:defFEta}
	\end{equation}
	Define $\gamma^\eta \in \R_{\geq0}$ by
	\begin{equation}
	\gamma^\eta  = u^\dagger \left( -\frac{\eta}{\delta t}
	\left(
	-1
	+ s_{t_0} \int_0^1 (1+v(f^\eta(s))) \, \ed s \right)^{-1}
	\right).
	\label{eq:defGammaEta}
	\end{equation}
	Define $X^\eta$ by
	\begin{equation}
	X^\eta = \gamma^{\eta} + s^C_{t_0} \int_0^1 e^{-r \delta t} q^{A}_{\BS}(s) f^{\eta}(s).
	\label{eq:defXeta}
	\end{equation}
	If there exists $\eta_{X_0}$ such that $X^{\eta_{X_0}}=X_0$ then we have
	\[
	v_{t_0}(X_0) = 
	\exp( -u(\gamma^{\eta_{X_0}})  \delta t ) \left(
	-1
	+ s_{t_0} \int_0^1 (1+v(f^{\eta_{X_0}}(s))) \, \ed s \right)
	\]
	and $\gamma^{\eta_{X_0}}$ is the optimal consumption at time $t_0$.		
	\end{enumerate}
	\label{prop:solutionExponential}
\end{proposition}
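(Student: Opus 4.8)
The plan is to collapse the definition of $v_{t_0}$ into a one-period Bellman recursion and then solve the resulting static problem by convex duality, reading off the budget multiplier as the scalar $\eta$.

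First I would condition on $\{\tau\ge t_0\}$ and split according to whether the individual dies during $[t_0,t_1)$ or survives to $t_1=t_0+\delta t$. Writing $s_{t_0}=\P(\tau\ge t_1\mid \tau\ge t_0)$ for the one-period conditional survival probability, and using the convention that consumption at the moment of death still counts, the form $w(x)=-\exp(-x)$ factorizes the contribution of the consumption $\gamma$ at $t_0$ out of the continuation value: a death contributes $-\exp(-u(\gamma)\delta t)$ with probability $1-s_{t_0}$, while survival contributes $\exp(-u(\gamma)\delta t)\,v_{t_1}(Y_{t_1})$. Collecting terms gives the recursion
\[
v_{t_0}(X)=\sup_{\gamma,\alpha}\exp(-u(\gamma)\,\delta t)\Big(-1+s_{t_0}\,\E\big(1+v_{t_1}(Y_{t_1})\big)\Big),
\]
where $Y_{t_1}$ is the time-$t_1$ wealth obtained by investing the post-consumption wealth over the period. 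Since $v_{t_1}\le 0$ the bracket is negative and the prefactor positive, so optimizing the investment $\alpha$ amounts to maximizing $\E(v_{t_1}(Y_{t_1}))$.

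Next I would replace the continuous-time control $\alpha$ by the attainable terminal-wealth profile, using completeness of the one-period Black--Scholes market. As $v_{t_1}$ is concave, the optimal $Y_{t_1}$ is a decreasing function of the one-period pricing kernel, hence a function $f(s)$ of the kernel's quantile $s\in(0,1)$; under this parametrization $\E(1+v_{t_1}(Y_{t_1}))=\int_0^1(1+v(f(s)))\,\ed s$ and, by Lemma~\ref{lemma:bsmCanonicalForm}, the budget identity for the invested wealth becomes $s^{C}_{t_0}\int_0^1 e^{-r\delta t}q^A_{\BS}(s)f(s)\,\ed s$. The power $s^{C}_{t_0}$ encodes the mortality credit: for $C=1$ the wealth of the deceased is inherited by survivors, reducing the effective cost of a given per-survivor terminal wealth by the survival factor, while for $C=0$ it is absent. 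Combining with the budget line $X=\gamma+s^{C}_{t_0}\int_0^1 e^{-r\delta t}q^A_{\BS}(s)f(s)\,\ed s$ recasts the problem as maximizing $\exp(-u(\gamma)\delta t)(-1+s_{t_0}\int_0^1(1+v(f))\,\ed s)$ over $(\gamma,f)$. Attaching a multiplier to this constraint, pointwise stationarity in $f(s)$ gives $v'(f(s))\propto s^{C-1}_{t_0}e^{-r\delta t}q^A_{\BS}(s)$, and inverting through $v^\dagger$ (Definition~\ref{def:dagger}) produces $f^\eta$ of \eqref{eq:defFEta}, with $\eta$ the multiplier rescaled by the consumption prefactor; stationarity in $\gamma$ inverts through $u^\dagger$ to give $\gamma^\eta$ of \eqref{eq:defGammaEta}. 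Substituting into the budget recovers $X^\eta$ of \eqref{eq:defXeta}, and evaluating the objective at the optimizer yields the stated value, so that finding $\eta_{X_0}$ with $X^{\eta_{X_0}}=X_0$ is a one-dimensional line search, exactly the solvability hypothesis of part~(\ref{prop:solutionExponentialB}).

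For part~(\ref{prop:solutionExponentialA}), monotonicity and $v_{t_0}\le 0$ are immediate, as enlarging $X$ relaxes the budget while the objective is a product of a positive and a nonpositive factor, and infeasibility forces $v_{t_0}(X)=-\infty$ for $X\le 0$ through $v_{t_1}(0)=-\infty$. Concavity is the delicate point and the main obstacle: the one-period objective is a product $\exp(-u(\gamma)\delta t)\cdot A(f)$ of a convex factor and a concave nonpositive factor, which is \emph{not} jointly concave in the controls, so concavity of $v_{t_0}$ cannot be obtained by naively mixing optimizers. The cleanest route I see is to bypass the product form and exhibit $v_{t_0}$ as the value of a genuinely concave program obtained by denominating investment in currency units rather than proportions: with dollar holdings as controls the wealth dynamics are linear, the admissible set of $(\gamma,\text{holdings})$ is convex, the functional $\E(-\exp(-\int u))$ is concave in these controls, and $X$ enters only through an affine budget, so the value is concave in $X$ as an infimum of affine functions via strong duality. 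Verifying the regularity (Slater-type) conditions for strong duality, and confirming that this concave reformulation is consistent with the parametric optimizer of part~(\ref{prop:solutionExponentialB}), is where the real work of part~(\ref{prop:solutionExponentialA}) lies.
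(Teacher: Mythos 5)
Your proposal follows essentially the same route as the paper: the death/survival conditioning is the dynamic-programming step of Lemma~\ref{lemma:dpp}, the passage to terminal-wealth profiles indexed by the pricing-kernel quantile via one-period completeness is Lemmas~\ref{lemma:bsmCanonicalForm} and~\ref{lemma:reduceToCalculusVariations}, and your pointwise stationarity conditions with $\eta$ as the multiplier rescaled by the consumption prefactor are exactly the Kuhn--Tucker conditions of Lemma~\ref{lemma:solveCalculusOfVariations}. For part~(\ref{prop:solutionExponentialA}) the paper simply cites Proposition~8.3.1 of \cite{luenberger}, which is the same argument you sketch --- joint concavity of $\E\left(-\exp\left(-\int u\right)\right)$ in the consumption path together with linearity of the wealth dynamics in dollar holdings, then partial maximization --- so the ``delicate point'' you identify is resolved exactly as you propose and is why the paper calls this part trivial.
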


Part \eqref{prop:solutionExponentialA} is trivial. For example the
statement about concavity follows from \cite{luenberger} Proposition 8.3.1.
The proof strategy for Part \eqref{prop:solutionExponentialB} is as follows:
\begin{enumerate}[(i)]
	\item Use the dynamic programming principle to obtain a recursive formulation of the problem.  This is done in Lemma \ref{lemma:dpp}
	\item Reduce the continuous time investment problem of the recursion step to a calculus of variations problem using the classification of one-period complete markets. This is done in Lemma \ref{lemma:reduceToCalculusVariations}, and is the main novel step in our approach.
	\item Solve the resulting calculus of variations problem. This is done in Lemma \ref{lemma:solveCalculusOfVariations}.
\end{enumerate}

Let us first see how to compute 
$v_{t_0}(X_0)$ as the solution to a one period optimal investment problem.

\begin{lemma}
	\label{lemma:dpp}
	Assume the conditions of Proposition \ref{prop:solutionExponential}.
	Let ${\cal A}_{X_0,t_0,t_1}$ denote the set of pairs $(\gamma_{t_0},\alpha)$
	where $\alpha$ is an admissible investment strategy for the period $[t_0,t_1]$ and $\gamma_{t_0}\in \R$ is the consumption at time $t_0$
	and satisfies $\gamma_{t_0}<X$. Then
	\begin{equation}
	\begin{split}
	v_{t_0
	}(X_0):=\sup_{{\gamma_{t_0},\alpha} \in {\cal A}_{X_0,t_0,t_1}}
	\Big\{
	\exp\left(-u(\gamma_{t_0}) \delta t \right)
	\left( -1 + s_{t_0} 
	\E\left( 1+v_{t_1}(X^{(\gamma_{t_0},\alpha)}_1) \right)
	\right)
	\Big\}
	\label{eq:exponentialDynamicProgramming}
	\end{split}
	\end{equation}
	where $X^{(\gamma_{t_0},\alpha)}_1$ is the value obtained by 
	following the investment strategy $\alpha$ from $t_0$ to $t_1$
	with an initial wealth of $s_t^{-C}(X_0-\gamma_{t_0})$.
\end{lemma}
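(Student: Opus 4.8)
The plan is to read \eqref{eq:exponentialDynamicProgramming} as the one--step dynamic programming principle for the value function $v_{t_0}$, and to prove it by exploiting two features: the multiplicative structure that $w(x)=-\exp(-x)$ gives the gain function, and the independence of the mortality space $(\Omega^L,\mathcal{F}^L,\P^L)$ from the market space $(\Omega^M,\mathcal{F}^M,\P^M)$.

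First I would condition on survival to $t_0$ and partition the first period according to the time of death. Since $\tau$ is supported on the grid and $t_1=t_0+\delta t$ is the next grid point, the event $\{\tau\geq t_0\}$ splits into $\{\tau=t_0\}$ and $\{\tau\geq t_1\}$; writing $s_{t_0}=(1-F_\tau(t_1))/(1-F_\tau(t_0))$ for the one--period survival probability, these have conditional probabilities $1-s_{t_0}$ and $s_{t_0}$. Under the convention that consumption at the time of death still counts, the grid point $t_0$ contributes $u(\gamma_{t_0})\,\delta t$ to $\int_{t_0}^\tau u(\gamma_s)\,\DS$ on both events. As $\gamma_{t_0}$ is $\mathcal{F}_{t_0}$--measurable (deterministic given $X_0$), the factor $e^{-u(\gamma_{t_0})\delta t}$ pulls out, and writing $K := \E(-\exp(-\int_{t_1}^\tau u(\gamma_s)\,\DS)\mid \tau\geq t_1)$ for the continuation term I would obtain
\[
\E\left(-\exp\left(-\int_{t_0}^\tau u(\gamma_s)\,\DS\right)\,\Big|\,\tau\geq t_0\right)
= e^{-u(\gamma_{t_0})\delta t}\big(-(1-s_{t_0})+s_{t_0}\,K\big).
\]

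Next I would identify $K$, optimized over the continuation, with $\E(v_{t_1}(X_1))$. By independence of $\tau$ from the market, conditioning on $\mathcal{F}_{t_1}$ reduces $K$ along the optimal continuation strategy to $v_{t_1}(X_1)$, where $X_1=X_1^{(\gamma_{t_0},\alpha)}$ is the $t_1$--wealth obtained by running $\alpha$ over $[t_0,t_1]$ from the post--consumption, post--mortality--credit wealth $s_{t_0}^{-C}(X_0-\gamma_{t_0})$; the factor $s_{t_0}^{-C}$ is the sharing of deceased members' wealth among survivors for $n=\infty$ and is absent for the individual, exactly as in \eqref{eq:fundValuePerIndividual}. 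In both cases the mortality credit is deterministic, so $X_1$ is $\mathcal{F}^M_{t_1}$--measurable and the remaining expectation is purely over the market. Substituting and using $-(1-s_{t_0})+s_{t_0}\E(v_{t_1}(X_1))=-1+s_{t_0}\E(1+v_{t_1}(X_1))$ produces the stated form.

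The substantive step, which I expect to be the main obstacle, is the factorization of the supremum, i.e. that optimizing the full strategy equals optimizing the first--period pair $(\gamma_{t_0},\alpha|_{[t_0,t_1]})$ after the continuation has already been optimized. The inequality ``$\leq$'' is immediate, since any admissible strategy restricts to a first--period pair and a continuation whose expectation is bounded pointwise by $v_{t_1}(X_1)$. The reverse inequality requires pasting, for each $\epsilon>0$, an $\epsilon$--optimal continuation onto a given first--period pair so that the spliced control is progressively measurable and admissible; this is a measurable--selection argument in the realized wealth $X_1$. The hypotheses of Proposition \ref{prop:solutionExponential} make this manageable: $v_{t_1}$ is concave and increasing on $(0,\infty)$, hence continuous, so $\epsilon$--optimal continuations can be chosen measurably, and the constraint $\gamma_{t_0}<X_0$ is harmless because $v_{t_1}(X)=-\infty$ for $X\leq0$ penalizes any strategy that fails to keep $X_1>0$. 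Finally, since $-1\leq -\exp(-\int\cdots)\leq0$ all quantities are bounded, so dominated convergence justifies passing to the limit $\epsilon\to0$ and interchanging supremum and expectation. A sign check ($v_{t_1}\leq0$ gives $-1+s_{t_0}\E(1+v_{t_1})\leq0$ against a positive prefactor) confirms $v_{t_0}\leq0$, consistent with part (a).
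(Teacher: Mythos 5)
Your proposal is correct and follows essentially the same route as the paper: condition on whether death occurs at $t_0$ or the individual survives to $t_1$, pull out the factor $\exp(-u(\gamma_{t_0})\delta t)$ using the multiplicative structure of $w(x)=-\exp(-x)$, and then invoke the dynamic programming principle to replace the optimized continuation by $\E(v_{t_1}(X_1^{(\gamma_{t_0},\alpha)}))$. The only difference is that you spell out the pasting/measurable-selection step behind the DPP, which the paper simply cites as ``the result now follows by the dynamic programming principle.''
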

\begin{proof}
	We calculate
	\begin{align*}
	v_{t_0}(X_0)
	&=\sup_{{\gamma,\alpha} \in {\cal A}_{X_0,t_0}} \Big\{ 
	\E\left(-\exp\left(-u(\gamma_{t_0}) \delta t \right) \P(\tau<t_1 \mid \tau \geq t_0 )
	\right)
	\\
	&\quad \quad \quad 
	+ \E\left(-\exp\left(-u(\gamma_{t_0}) \delta t - \int_{t_1}^\tau u(\gamma_t) \, \DT \right  ) \mid \tau\geq t_1 \right) \P( \tau \geq t_1 \mid \tau \geq t_0 )
	\Big\} \\
	&=\sup_{{\gamma,\alpha} \in {\cal A}_{X_0,t_0}} \Big\{
	-  (1-s_{t_0})\exp\left(-u(\gamma_{t_0}) \delta t \right) \\
	&\quad \quad \quad +
	s_{t_0} \exp(-u(\gamma_{t_0}) \delta t) \E\left(-\exp\left( - \int_{t_1}^\tau u(\gamma_t) \, \DT \right  ) \mid \tau \geq t_1 \right) \Big\} \\
	&=\sup_{{\gamma,\alpha} \in {\cal A}_{X_0,t_0}} 
	\Bigg\{
	\exp\left(-u(\gamma_{t_0}) \delta t \right) \times \\
	&\quad \quad \quad
	\left(
	-  1 +
	s_{t_0} \E\left(1-\exp\left( - \int_{t_1}^\tau u(\gamma_t) \, \DT \right  ) \mid \tau \geq t_1 \right) \right) \Bigg\}
	\end{align*}
	The result now follows by the dynamic
	programming principle.
\end{proof}

Equation \eqref{eq:exponentialDynamicProgramming} is a one-period investment problem in a complete market.
Complete one period markets are classified 
in \cite{armstrongClassification}. This allows us to find
a more convenient, but isomorphic, representation
of our market. 
For complete one period markets, we may
say that two markets are isomorphic if they have the same
risk-free rate and if there is a map which acts as a 
probability space isomorphism for both the $\P$ and $\Q$
measures simultaneously.

Let $\Omega^A$ be the probability space given by $[0,1]\times[0,1]$
equipped with the Lebesgue measure. Let $q^A:[0,1]\to \R_{>0}$ be a
measurable function of integral $1$. We may define an
abstract financial market
$(\Omega^A,q^A,r)$ whose assets consist of random variables $f$
(representing the payoff of the asset)
defined on $\Omega^A$. The cost of asset $f$ is given by
\[
C^A(f):= \int_{[0,1]\times[0,1]} e^{-r \delta t} f(x,y)\,q^A(x) \, \ed x \, \ed y
\]
if this integral exists. Assets of positively infinite or undefined cost
cannot be purchased. Assets of infinitely negative cost can
be purchased at any price. The $A$ in our superscripts
stands for abstract. Notice that in this abstract market
the random variable $U$ defined by $U(x,y)=x$ is uniform
in the $\P^A$ measure and has density $q^A$ in the $\Q^A$ measure.

\begin{lemma}
	\label{lemma:bsmCanonicalForm}	
	As a one period market, the Black--Scholes--Merton market 
	from time $t_0$ to time $t_1$
	is isomorphic to the market $(\Omega^A,q^A_{\BS},r)$.
\end{lemma}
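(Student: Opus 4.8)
The plan is to identify the essential structure of the one-period Black--Scholes--Merton market and then match it against the abstract market $(\Omega^A, q^A_{\BS}, r)$ using the classification of complete one-period markets from \cite{armstrongClassification}. Over the interval $[t_0,t_1]$ with $\delta t = t_1 - t_0$, continuous trading makes the Black--Scholes market complete: by the martingale representation theorem every integrable ${\cal F}^M_{t_1}$-measurable payoff $f$ is attainable with price $\E_\Q(e^{-r\delta t} f)$ under the risk-neutral measure $\Q$. The crucial structural feature is that the pricing kernel $\ed\Q/\ed\P^M$ depends on the Brownian path only through the terminal increment $W_{t_1}-W_{t_0}$, equivalently through the normalised variable $Z := (W_{t_1}-W_{t_0})/\sqrt{\delta t}$. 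First I would record that $Z\sim N(0,1)$ under $\P^M$ while, by Girsanov's theorem with market price of risk $\theta := (\mu-r)/\sigma$, $Z \sim N(-\theta\sqrt{\delta t}, 1)$ under $\Q$, and that $\ed\Q/\ed\P^M = \exp(-\theta\sqrt{\delta t}\,Z - \tfrac12\theta^2\delta t)$ is a monotone function of $Z$ alone.

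The second step is to construct the canonical coordinate that plays the role of $U(x,y)=x$ in the abstract market. I would set $U := \Phi(\tilde Z)$, where $\tilde Z$ is $\pm Z$ with the orientation chosen so that $\tilde Z \sim N(-M,1)$ under $\Q$ while remaining standard normal under $\P^M$, with $M = |\mu-r|\sqrt{\delta t}/\sigma$. This makes $U$ uniform on $(0,1)$ under $\P^M$. A direct computation of the quantile transform then gives $\Q(U \le u) = \Q(\tilde Z \le \Phi^{-1}(u)) = \Phi(\Phi^{-1}(u) + M) = Q(u)$, so that the $\Q$-density of $U$ relative to the uniform law is exactly $q^A_{\BS} = \ed Q/\ed u$ (which integrates to $Q(1)-Q(0)=1$ as required); consistency of the change of measure confirms $q^A_{\BS}(U) = \ed\Q/\ed\P^M$. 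The reflection $Z \mapsto -Z$ (equivalently $U \mapsto 1-U$) is what absorbs the sign of $\mu - r$ and accounts for the absolute value in the definition of $M$.

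Finally I would promote this to a genuine simultaneous probability-space isomorphism $T$ onto $\Omega^A = [0,1]^2$. Since the Black--Scholes sample space carries far more randomness than $Z$, I would let the first coordinate of $T$ be $U$ and take the second coordinate $y \in [0,1]$ to encode the residual randomness of the Brownian path conditional on $Z$, normalised to be uniform under $\P^M$. The key observation making this work is that $\ed\Q/\ed\P^M$ depends only on $Z$, hence only on $U$; therefore the conditional law of the residual coordinate given $U$ is identical under $\P^M$ and under $\Q$, so $y$ may be taken uniform and independent of $U$ under both measures at once. The resulting map satisfies $T_* \P^M = \P^A$ and $T_*\Q = \Q^A$, preserves the risk-free rate $r$, and is therefore an isomorphism of one-period markets in the sense of \cite{armstrongClassification}. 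The main obstacle is exactly this last step: exhibiting a single measurable map that is measure-preserving for $\P^M$ and $\Q$ simultaneously, which rests on the $\sigma(Z)$-measurability of the pricing kernel; once that is in place, the identification of $Q$ and $q^A_{\BS}$ is a routine application of Girsanov's theorem and the Gaussian quantile transform.
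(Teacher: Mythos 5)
Your proof is correct, and the core computation --- Girsanov's theorem plus the Gaussian quantile transform yielding $\Q(U \le u) = \Phi(M + \Phi^{-1}(u))$ --- is the same one the paper performs, phrased in terms of the normalised increment $Z$ rather than the log stock price $z_{t_1}$ (an equivalent choice of coordinate). Where you genuinely differ is in how the isomorphism itself is obtained. The paper invokes the classification theorem of \cite{armstrongClassification} twice: once to assert that the one-period Black--Scholes--Merton market is isomorphic to \emph{some} $(\Omega^A, q^A, r)$, and once to assert that the isomorphism carries the canonical coordinate $F_{\frac{\ed\Q}{\ed\P}}(\frac{\ed\Q}{\ed\P})$ to $U$; all that remains for the paper is to compute the $\Q$-law of that coordinate. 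You instead construct the map explicitly: first coordinate $\Phi(\tilde Z)$, second coordinate a uniform encoding of the Brownian path conditional on $Z$, with the key observation that $\sigma(Z)$-measurability of the pricing kernel forces the conditional law of the residual to agree under $\P^M$ and $\Q$, so a single map is measure-preserving for both simultaneously. This makes your argument self-contained (modulo the standard-Borel isomorphism theorem needed to encode the residual path as a uniform variable independent of $U$), at the cost of redoing work that the classification theorem packages up. One small point in your favour: your orientation $U = \Phi(\tilde Z)$ is a \emph{decreasing} function of the pricing kernel and is precisely the choice that reproduces the stated $Q(u) = \Phi(M + \Phi^{-1}(u))$; the increasing rearrangement $F_{\frac{\ed\Q}{\ed\P}}(\frac{\ed\Q}{\ed\P})$ named in the paper's proof actually has $\Q$-distribution function $\Phi(\Phi^{-1}(u) - M)$, the reflection $u \mapsto 1-u$ of the stated one. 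The discrepancy is immaterial since that reflection is itself a market isomorphism, but your version is the internally consistent one.
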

\begin{proof}
	
	If $\mu=r$, then the result is trivial. We will consider
	the case $\mu>r$, the case $\mu<r$ is similar.
	
	The classification of complete markets already shows that
	the Black--Scholes--Merton market over the time period $[t_0,t_1]$
	is isomorphic to a
	market of this form for an appropriate choice of $q^A$
	which we will call $q^A_{\BS}$. Let
	$\frac{\ed \Q}{\ed \P}$ denote the Radon--Nikodym derivative
	of the measures $\Q$ and $\P$ in the Black--Scholes--Merton market.
	Let $F_\frac{\ed \Q}{\ed \P}$ denote the $\P$-measure distribution
	function of the Radon--Nikodym derivative.
	The classification theorem moreover gives us
	an isomorphism
	for both the $\P$ and $\Q$ measures which maps the uniformly distributed random variable $U^\prime:=F_\frac{\ed \Q}{\ed \P}(\frac{\ed Q}{\ed P})$
	to $U$. In particular this tells us that 
	\begin{equation}
	\int_{0}^{w} q^A_{\BS}(s) \ed s = \P_{\Q^A}(U \leq w)
	= \P_{\Q}(F_\frac{\ed \Q}{\ed \P}(U^\prime \leq w) )
	\label{eq:defQInProof}
	\end{equation}
	Differentiating this, we may obtain an expression for $q^A_{\BS}$.		
	
	The $\P$ measure distribution function of the log stock price, $z_{t_1}=\log(S_{t_1})$ given the log stock price $z_{t_1}$
	in the Black--Scholes--Merton model is
	\[
	p(z) = \frac{1}{\sqrt{2 \pi \sigma \delta t }}
	\exp\left( -\frac{(z-(z_{t_0}+(\mu-\frac{1}{2}\sigma^2)\delta t))^2}{2 \sigma^2 \delta t}
	\right).
	\]
	Similarly the
	$\Q$ measure distribution function of $z_{t_1}$ is
	\[
	q(z) = \frac{1}{\sqrt{2 \pi \sigma \delta t }}
	\exp\left( -\frac{(z-(z_{t_0}+(r-\frac{1}{2}\sigma^2)\delta t))^2}{2 \sigma^2 \delta t}
	\right).
	\]
	The standard computation of the $\Q$ measure using Girsanov's theorem
	shows that
	\begin{align*}
	\frac{\ed \Q}{\ed \P}(z)&=\frac{q(z)}{p(z)}.
	\end{align*}
	Hence
	\begin{align*}
	\frac{\ed \Q}{\ed \P}(z) &= \exp\left( -\frac{(z-(z_{t_0}+(r-\frac{1}{2}\sigma^2)\delta t))^2
		- (z-(z_{t_0}+(\mu-\frac{1}{2}\sigma^2)\delta t))^2}{2 \sigma^2 \delta t}
	\right).
	\end{align*}
	Note that the term in side the $\exp$ is linear in $z$, so $\frac{\ed \Q}{\ed \P}$
	is decreasing. Hence
	$U^\prime(z)$ is decreasing, and we recall that $U^\prime$ is uniformly distributed. Hence, $U^\prime(z)=1-F_z(z)$ where $F_z$ is the $\P$-measure
	distribution function of $z_t$. But conditioned on $z_{t_0}$, $z_{t_1}$ is normally distributed with mean $\mu-\tfrac{1}{2}\sigma^2$ and
	standard deviation $\sigma \sqrt{\delta t}$. Hence
	\[
	z_{t_1} = z_{t_0} + (\mu-\tfrac{1}{2}\sigma^2) \delta t + 
	\sigma \sqrt{\delta t} \, \Phi^{-1}(U^\prime)
	\]
	where $\Phi$ is the inverse distribution function of the standard normal
	distribution.
	
	We now compute
	\begin{align*}
	\P_\Q( U^\prime \leq w )
	&=\P_\Q( z_{t_1} \leq
	z_{t_0}+(\mu-\tfrac{1}{2}\sigma^2)\delta t + 
	\sigma \sqrt{\delta t} \Phi^{-1}( w ) ) \\
	&=\P_\Q( z_{t_1} \leq
	z_{t_0}+(r-\tfrac{1}{2}\sigma^2)\delta t
	+ (\mu-r) \delta t
	+
	\sigma \sqrt{\delta t} \Phi^{-1}( w ) ).
	\end{align*}
	Since $z_{t_1}$ is normally distributed in the $\Q$ measure with 
	mean $r-\tfrac{1}{2}\sigma^2$ and standard deviation $\sigma \sqrt{\delta t}$
	we find
	\[
	\P_\Q( U^\prime \leq w ) = \Phi\left( \left| \frac{(\mu-r)\sqrt{\delta t}}{\sigma}
	\right|  + \Phi^{-1}(w) \right).
	\]
	Combining this with \eqref{eq:defQInProof}, we get the result.
\end{proof}

Having found a simple isomorphic representative of our market, we can rewrite
the equation \eqref{eq:exponentialDynamicProgramming} in terms of the abstract market $\Omega^A$.

\begin{lemma}
	\label{lemma:reduceToCalculusVariations}
	Assume the conditions of Proposition \ref{prop:solutionExponential}.
	The value function $v_{t_0}(X_0)$ can be calculated by
	solving the optimisation problem
	\begin{equation}
	\begin{aligned}
	& \underset{\gamma \in \R, f \in L^0[0,1]}{\mathrm{maximize}}
	& & 
	\exp( -u(\gamma)  \delta t ) \left(
	-1
	+ s_{t_0} \int_0^1 (1+v(f(s))) \, \ed s \right) \\
	& \text{subject to}
	& & \gamma + s^C_{t_0} \int_0^1 e^{-r \delta t} q^{A}_{\BS}(s) f(s) \, \ed s \leq X_0.
	\end{aligned}
	\label{eq:abstractInvestmentProblem}
	\end{equation}
	taking $v=v_{t_1}$.	
\end{lemma}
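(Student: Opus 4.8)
The plan is to take the dynamic programming recursion \eqref{eq:exponentialDynamicProgramming} of Lemma \ref{lemma:dpp} as the starting point and to eliminate the investment strategy $\alpha$ in favour of the terminal payoff it produces. Equation \eqref{eq:exponentialDynamicProgramming} already exhibits $v_{t_0}(X_0)$ as a supremum, over the consumption level $\gamma_{t_0}$ and the admissible strategy $\alpha$ on $[t_0,t_1]$, of the functional $\exp(-u(\gamma_{t_0})\delta t)(-1 + s_{t_0}\E(1 + v_{t_1}(X_1)))$, in which the prefactor $\exp(-u(\gamma_{t_0})\delta t)$ and the additive and multiplicative constants do not involve $\alpha$. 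Hence the only work is to re-express the inner investment problem $\sup_\alpha \E(1 + v_{t_1}(X_1))$, and this is the step where completeness of the one-period market is used.

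First I would invoke Lemma \ref{lemma:bsmCanonicalForm} to replace the Black--Scholes--Merton market over $[t_0,t_1]$ by the isomorphic abstract market $(\Omega^A, q^A_{\BS}, r)$. Because this market is complete, choosing an admissible investment strategy starting from the post-consumption wealth $s_{t_0}^{-C}(X_0 - \gamma_{t_0})$ is equivalent to choosing any attainable terminal payoff $f$ whose replication cost does not exceed that wealth, that is $C^A(f) = \int_0^1 e^{-r \delta t} q^A_{\BS}(s) f(s)\,\ed s \leq s_{t_0}^{-C}(X_0 - \gamma_{t_0})$. Multiplying through by $s_{t_0}^C$ and moving $\gamma_{t_0}$ to the left (writing $\gamma = \gamma_{t_0}$) gives exactly the budget constraint $\gamma + s_{t_0}^C\int_0^1 e^{-r \delta t} q^A_{\BS}(s) f(s)\,\ed s \leq X_0$ of \eqref{eq:abstractInvestmentProblem}. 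Since the isomorphism carries $\P$ to the Lebesgue measure and $U$ is uniform under $\P^A$, the terminal wealth $X_1$ is identified with the payoff $f$ and the expectation $\E(1 + v_{t_1}(X_1))$ becomes $\int_0^1(1 + v(f(s)))\,\ed s$, producing the objective of \eqref{eq:abstractInvestmentProblem}.

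Two points need care. The abstract space is $[0,1]\times[0,1]$, so a priori $f$ may depend on both coordinates, whereas \eqref{eq:abstractInvestmentProblem} optimises over $f \in L^0[0,1]$; to justify the restriction I would note that $C^A$ charges only for dependence on the first coordinate while $\int_0^1(1+v(f))$ is concave in $f$, so replacing $f(x,y)$ by its average $\bar{f}(x) = \int_0^1 f(x,y)\,\ed y$ preserves the cost and, by Jensen's inequality together with the concavity of $v=v_{t_1}$, does not decrease the objective. The main obstacle is making the completeness correspondence precise: I must check that the classification of \cite{armstrongClassification} guarantees both that every such $f$ is attainable and that its minimal replication cost equals $C^A(f)$, and that the admissibility requirement $X_1 \geq 0$ is handled correctly. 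The latter is automatic, since $v_{t_1} = -\infty$ on $(-\infty,0]$ forces any $f$ that is negative on a set of positive measure to give objective $-\infty$, so such payoffs are never optimal; and the inequality budget constraint yields the same supremum as equality because $v_{t_1}$ is increasing, so no admissible wealth is ever left unused.
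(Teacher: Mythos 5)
Your proposal follows essentially the same route as the paper's proof: it passes through the dynamic programming recursion of Lemma \ref{lemma:dpp}, uses the market isomorphism of Lemma \ref{lemma:bsmCanonicalForm} to replace the investment strategy by an attainable payoff $f$ subject to the cost constraint $C^A(f)$, and reduces from $f(x,y)$ on $[0,1]^2$ to $f \in L^0[0,1]$ by averaging over the second coordinate and invoking the concavity of $v_{t_1}$. Your additional remarks on admissibility (handled by $v_{t_1}=-\infty$ on the negative axis) and on the equivalence of the inequality and equality budget constraints are correct elaborations of points the paper leaves implicit.
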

\begin{proof}
	Let us write $(\gamma_{t_0}, f)$ for a pair of a consumption
	$\gamma_{t_0} \in \R$ and an
	investment $f \in L^0(\Omega^U)$.
	We denote by ${\cal B}_{X_0}$ the set 
	of consumptions and investments that are available with
	a budget of ${X_0}$
	\[
	{\cal B}_{X_0}=\{ (\gamma_{t_0},f) \in \R \times L^0(\Omega^A) \mid \gamma_{t_0} + s_{t_0}^C C^A_{\BS}(f) \leq X_0 \}.
	\]
	If we also write
	\begin{equation*}
	\begin{split}
	{\cal J}^A_{t_0}(\gamma_{t_0},f):=
	&\exp( -u(\gamma_{t_0})  \delta t )
	\left(-1 + s_{t_0}  \int_{[0,1]\times[0,1]} 
	\, (1+v_{t_1}( f(x,y) )) \, \ed x \, \ed y
	\right)
	\end{split}
	\end{equation*}
	to accord with equation \eqref{eq:exponentialDynamicProgramming},
	then the fact that our markets are isomorphic allows
	us to deduce that
	\begin{equation}
	v_{t_0}(X_0):=\sup_{(\gamma_{t_0},f) \in {\cal B}_{X_0}} {\cal J}^A(\gamma_{t_0},f)
	\label{eq:exponentialDynamicProgramming2}.
	\end{equation}
	
	Since $v_{t_1}$ is assumed to be concave we may average an
	investment $f(x,y)$ over the factor $y$ to obtain a new investment
	$\overline{f}$ which achieves a higher value for the gain
	function ${\cal J}^A$. Thus we may restrict our attention
	to investments $f(x,y)$ which depend only upon $x$. The result follows.
\end{proof}

Note that an investment $f \in L^1$ for this abstract market
model corresponds to a derivative with payoff given by
the random variable $f(F_{\frac{\ed \Q}{\ed \P}} (\frac{\ed \Q}{\ed \P}) )$  in the original Black--Scholes--Merton market (or indeed in any isomorphic market). This derivative can then be replicated by delta hedging
in the Black--Scholes--Merton market. So the solution to the
abstract investment problem \eqref{eq:abstractInvestmentProblem}
can be straightforwardly mapped to a solution of the original problem.

\begin{lemma}
	\label{lemma:solveCalculusOfVariations}
	Assume the conditions and definitions of Proposition \ref{prop:solutionExponential}.	
	If an $\eta_{X_0}$ exists with $X^{\eta_{X_0}}=X_0$,
	then the solution of \eqref{eq:abstractInvestmentProblem} is given by $f^{\eta_{X_0}}$
	and $\gamma^{\eta_{X_0}}$.
\end{lemma}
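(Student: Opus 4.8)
The plan is to solve the constrained optimisation problem \eqref{eq:abstractInvestmentProblem} by the method of Lagrange multipliers, exploiting the fact that the objective is concave in $(\gamma, f)$ (since $u$ and $v=v_{t_1}$ are concave) and the constraint is linear, so that the first-order Kuhn--Tucker conditions are both necessary and sufficient for optimality. First I would observe that the budget constraint will be binding at the optimum (because the objective is strictly increasing in the admissible quantities), so we may write the constraint as an equality and introduce a single multiplier $\eta>0$ for it. I would then form the Lagrangian
\[
{\cal L}(\gamma, f, \eta) = \exp(-u(\gamma)\delta t)\left(-1 + s_{t_0}\int_0^1 (1+v(f(s)))\,\ed s\right)
- \eta\left(\gamma + s^C_{t_0}\int_0^1 e^{-r\delta t} q^A_{\BS}(s) f(s)\,\ed s - X_0\right).
\]

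Next I would take the first-order conditions. Differentiating pointwise in $f(s)$ (a calculus-of-variations step, justified by concavity) gives
\[
\exp(-u(\gamma)\delta t)\, s_{t_0}\, v^\prime(f(s)) = \eta\, s^C_{t_0}\, e^{-r\delta t} q^A_{\BS}(s),
\]
which I would invert using $v^\dagger$ (Definition \ref{def:dagger}), absorbing the prefactor $\exp(-u(\gamma)\delta t)$ into the multiplier by a redefinition of $\eta$, to recover exactly the form $f(s) = f^\eta(s) = v^\dagger(\eta\, s^{C-1}_{t_0} e^{-r\delta t} q^A_{\BS}(s))$ of \eqref{eq:defFEta}. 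Differentiating in $\gamma$ gives a second stationarity condition
\[
-u^\prime(\gamma)\delta t\,\exp(-u(\gamma)\delta t)\left(-1 + s_{t_0}\int_0^1(1+v(f(s)))\,\ed s\right) = \eta,
\]
and substituting the optimal $f=f^\eta$ and solving for $u^\prime(\gamma)$, then inverting via $u^\dagger$, should reproduce \eqref{eq:defGammaEta}. The binding budget constraint with $f=f^\eta$ and $\gamma=\gamma^\eta$ is precisely the definition \eqref{eq:defXeta} of $X^\eta$, so the hypothesis $X^{\eta_{X_0}}=X_0$ guarantees that $(\gamma^{\eta_{X_0}}, f^{\eta_{X_0}})$ is feasible with the constraint active. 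Evaluating the objective at this point yields the stated formula for $v_{t_0}(X_0)$.

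The main obstacle I anticipate is rigorously justifying the pointwise first-order condition in the infinite-dimensional variable $f \in L^0[0,1]$, and confirming that the candidate $f^\eta$ is genuinely optimal rather than merely stationary. I would address this by appealing to concavity: since $v$ is concave and the constraint set is convex, the Lagrangian is concave in $(\gamma, f)$ for fixed $\eta\ge 0$, so any point satisfying the Kuhn--Tucker conditions with complementary slackness is a global maximiser; the interchange of the supremum with the pointwise optimisation of the integrand is then legitimate because maximising $\int_0^1 [\,s_{t_0}(1+v(f(s))) - \eta\, s^C_{t_0} e^{-r\delta t} q^A_{\BS}(s) f(s)\,]\,\ed s$ (after scaling out the common positive factor) decouples into an independent pointwise maximisation at each $s$. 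A secondary subtlety is the possibility that $v^\dagger$ or $u^\dagger$ take boundary values where $u$ or $v$ fail to be differentiable; here the use of the subdifferential in Definition \ref{def:dagger} and the convention that $v=-\infty$ for nonpositive wealth handles the corners, ensuring $f^\eta(s)\ge 0$ and $\gamma^\eta\ge 0$ lie in the admissible range. Once feasibility and the sufficiency of the Kuhn--Tucker conditions are secured, the identification of $(\gamma^{\eta_{X_0}}, f^{\eta_{X_0}})$ as the optimiser — and hence the formula for $v_{t_0}(X_0)$ — follows immediately.
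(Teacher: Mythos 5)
Your proposal follows essentially the same route as the paper: form the Lagrangian for the budget-constrained problem, derive the Kuhn--Tucker stationarity conditions in $\gamma$ and pointwise in $f(s)$, invert them via $u^\dagger$ and $v^\dagger$, absorb the factor $\exp(u(\gamma)\delta t)$ into a redefined multiplier $\eta$, and invoke concavity/saddle-point sufficiency to conclude that the stationary point with a binding budget constraint is the maximiser. The only difference is technical bookkeeping at the infinite-dimensional step, where the paper argues by perturbing along a single improving direction and applying the finite-dimensional theory of \cite{rockafellar} (Theorem 28.3) rather than by your pointwise decoupling of the integrand.
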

\begin{proof}
	We will now solve \eqref{eq:abstractInvestmentProblem} using
	the method of Lagrange multipliers. We define a vector space $V=\R \oplus L^0([0,1]) \oplus \R$
	For $\lambda \in \R$,
	we define the Lagrangian
	$L:V \to \R$ by
	\begin{equation}
	\begin{split}
	L(\gamma,f,\lambda):=&
	\exp( -u(\gamma)  \delta t ) \left(
	-1
	+ s_{t_0} \int_0^1 (1+v(f(s))) \, \ed s \right) \\
	& + \lambda \left( -X_0 + \gamma +  s^C_{t_0} \int_0^1 e^{-r \delta t} q^A_{\BS}(s) f(s) \, \ed s \right).
	\label{eq:defLagrangian}
	\end{split}
	\end{equation}
	
	Computing the directional derivatives of $L(\gamma,f)$ in we find the following
	necessary and sufficient conditions for $(\gamma,f)$ to be a saddle point of $L(\gamma,f,\lambda)$
	for the given $\lambda$. Firstly
	\begin{equation}
	0 \in 
	-\partial u(\gamma) \delta t \exp( -u(\gamma)  \delta t ) \left(
	-1
	+ s_{t_0} \int_0^1 (1+v(f(s))) \, \ed s \right) + \lambda
	\label{eq:kuhnTucker1}
	\end{equation}
	where $\partial u(\gamma)$ is the subdifferential of $u$ at $\gamma$.
	Secondly
	\[
	0 = \int_0^1 \left( \exp( -u(\gamma)  \delta t ) s_{t_0} (\partial v)(f(s)) + \lambda  s^C_{t_0} e^{-r \delta t} q^A_{\BS}(s) \right) g(s) \, \ed s.
	\]
	The integral is well-defined since $\partial v$ will
	be single valued almost everywhere.
	This must hold for all $g(s)$ so this is equivalent to requiring
	\begin{equation}
	(\partial v)(f(U)) =- \lambda \exp( u(\gamma)  \delta t )  s^{C-1}_{t_0} e^{-r \delta t} q^A_{\BS}(U).
	\label{eq:kuhnTucker2}
	\end{equation}
	for almost all $U\in(0,1)$.
		
	If the Kuhn-Tucker
	conditions \eqref{eq:kuhnTucker1} and \eqref{eq:kuhnTucker2}
	are satisfied, $((\gamma,f),\lambda)$ will be a saddle point of the Lagrangian.
	The theory of Lagrange multipliers (see \cite{rockafellar} Theorem 28.3)
	now shows that if we can
	find $(\gamma, f)$ satisfying the Kuhn--Tucker conditions \eqref{eq:kuhnTucker1}
	and \eqref{eq:kuhnTucker2} then this will yield a maximizer
	for the problem	\eqref{eq:abstractInvestmentProblem} in the case
	where the initial budget satisfies
	\begin{equation}
	X_0 = \gamma + s^C_{t_0} \int_0^1 e^{-r \delta t} q^A_{\BS}(s) f(s) \, \ed s.
	\label{eq:abstractProblemBudget}
	\end{equation}
	We remark that the theory of Lagrange multipliers given in \cite{rockafellar} is
	stated in terms of finite dimensional spaces. We may, nevertheless, apply it
	by noting that if $(\gamma, f)$ satisfies the Kuhn--Tucker conditions yet
	is not a maximizer then there must be some direction
	in which we can perturb $(\gamma, f)$ to obtain a higher value for the gain. We may
	now apply the finite dimensional theory to the vector space generated by this perturbation
	to obtain a contradiction.
			
	The result now follows by introducing a variable
	\[
	\eta:=-\lambda \exp( u(\gamma) \delta t)
	\]
	to simplify the equations.
\end{proof}

This completes the proof of Proposition \ref{prop:solutionExponential}.

The outstanding difficulty is proving that an $\eta$ solving $X^\eta=X_0$ exists.
One might attempt to use general duality theory to do this. Theorem 8.3.1 of \cite{luenberger}
ensures that so long as $X_0$ is chosen to satisfy the Slater condition we can guarantee
the existence of a $\lambda$ minimizing the dual problem. However, this theorem does not
guarantee the existence of a maximizer for the primal problem. As a result, even if one knows
the value of $\lambda$ it is still unclear whether a solution to \eqref{eq:kuhnTucker1}
and \eqref{eq:kuhnTucker2} exists. When one introduces the variable $\eta$, this ensures
that $\gamma^\eta$ and $f^\eta$ are well-defined once $\eta$ is known and so the problem
shifts to finding the correct value of $\eta$. We will resolve this issue in 
the cases of interest using a continuity argument in the next section.

\subsection{Numerical approximation of the multi-period problem}

The results of the previous section immediately suggests a numerical method for
solving our investment problems with exponential utility.

We define the minimum acceptable consumption to be
\[
\gamma_{\min} := \inf \{ x \in \R \mid u(x)>-\infty \}.
\]
In addition to the usual assumptions that $u$ is concave and increasing, we assume
\begin{equation}
u^\dagger \text{ is continuous on } (0,\infty)
\label{eq:uDaggerCts}
\end{equation}
and
\begin{equation}
\lim_{p\to 0} u^\dagger(p)=\infty.
\label{eq:uDaggerP0}
\end{equation}
We note that our assumption that $u$ is concave and increasing also ensures that
\begin{equation}
\lim_{p\to \infty} u^\dagger(p)=\gamma_{\min}.
\label{eq:uDaggerPInf}
\end{equation}

\begin{algorithm}
	\label{algo:exponential}	
	Choose a grid of points
	${\cal X}=\{x_1,x_2 \ldots, x_N\}$ on which we will approximate the value function $v_t$.
	We will write $\tilde{v}_t$ for our approximate value function.
	This will be a concave increasing
	piecewise linear function equal to $-\infty$ on $(-\infty,x_1)$, linear on $[x_i,x_{i+1}]$
	and constant on $[x_N,\infty)$. We will simply need
	to store the values ${\tilde{v}}_t(x_i)$ at the grid points.	
	
	To avoid numerical overflow issues we define a function
	$\ell(x):=-\log(-x)$ and store the values $\ell(\tilde{v}_t(x_i))$
	at each grid point rather than storing $\tilde{v}_t(x_i)$ itself.
	\begin{enumerate}[(i)]
		\item Choose the values at the final time point $T-\delta t$ by
		\[
		\tilde{v}_{T-\delta t}( x_i ) := v_{T-\delta t}(x_i) = -\exp(-u(x_i) \delta t).
		\]
		Or equivalently
		\[
		\ell(\tilde{v}_{T-\delta t}( x_i )) = \ell(v_{T-\delta t}(x_i)) = u(x_i) \delta t.
		\]
		\item Suppose that $\tilde{v}_t$ is known. Set
		$\tilde{v}_{t-\delta t}(x_i)$ to be the solution of \eqref{eq:abstractInvestmentProblem}
		with $v_{t_1}=\tilde{v}_t$ and initial budget $x_i$. We describe in detail how
		to solve this problem in Proposition \ref{prop:logFormaule} below.
	\end{enumerate}
\end{algorithm}

Since $v_{T-\delta t}$ is concave and increasing and
$\tilde{v}_{T-\delta t}$ is piecewise linear $\tilde{v}_{T-\delta t}(x)\leq v_{T-\delta t}(x)$. Let $\hat{v}_{t}(x)$
be defined to be the solution of \eqref{eq:abstractInvestmentProblem}
with $v_{t_1}=\tilde{v}_t$ and initial budget $x$. We see
that $\tilde{v}_{T-\delta t}(x)\leq \hat{v}_{T-\delta t}(x)
\leq v_{T-\delta t}(x)$.

Let
${\cal X}_1 \subseteq {\cal X}_2 \subseteq {\cal X}_3 \ldots $ 
be an increasing sequence of grids
with ${\cal X}_\infty:=\cup_{j=1}^\infty {\cal X}_i$ being dense
in $(0,\infty)$. Write $\tilde{v}^j_t$
for the approximations with respect to ${\cal X}_i$.
We see by repeating the argument above that $\tilde{v}^j_t(x)\leq \tilde{v}^j_t(x)$ at all points $x \in (0,\infty)$.
Hence we may define
\[
\tilde{v}_t(x)=\lim_{j\to \infty} \tilde{v}^j_t.
\]

\begin{theorem}[Convergence of Algorithm \ref{algo:exponential}]
	Define
	\[
	X_{\min,t} = \sup\{ x \mid v_t(x) = -\infty \}.
	\]
	For $x > X_{\min,t}$ we have
	\[
	\tilde{v}_t(x) = v_t(x).
	\]
\end{theorem}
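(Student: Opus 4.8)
The plan is to prove the statement by backward induction on $t \in {\cal T}$, the inductive claim being that $\tilde{v}_t = v_t$ on $(X_{\min,t},\infty)$. The monotonicity already recorded in the text shows that $\tilde{v}_t = \lim_{j}\tilde{v}^j_t$ exists, is concave and increasing, and satisfies $\tilde{v}_t \le v_t$ everywhere, so only the reverse inequality is at issue. I would first reduce matters to verifying equality on the dense set ${\cal X}_\infty = \bigcup_j {\cal X}_j$: since $\tilde{v}_t$ and $v_t$ are both concave they are continuous on the interior of the region where they are finite, and two continuous functions agreeing on a dense set agree everywhere. One checks en route that $\tilde{v}_t$ is actually finite on $(X_{\min,t},\infty)$, using the elementary fact that a concave function which agrees with the finite function $v_t$ on a dense subset of an interval is finite on the interior of that interval.

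Write $G[V](x)$ for the value of the one-period problem \eqref{eq:abstractInvestmentProblem} with terminal value function $V$ and initial budget $x$; this is exactly the quantity produced by the recursion step of Algorithm \ref{algo:exponential}, and the dynamic programming principle of Lemma \ref{lemma:dpp} gives $v_{t-\delta t} = G[v_t]$. The key observation for the induction step is a grid-point identity: if $y \in {\cal X}_\infty$ then $y \in {\cal X}_j$ for all large $j$, and since the piecewise-linear interpolant is exact at its own nodes we have $\tilde{v}^j_{t-\delta t}(y) = G[\tilde{v}^j_t](y)$ for all such $j$. Letting $j\to\infty$ therefore reduces the entire theorem, under the inductive hypothesis $\tilde{v}^j_t \uparrow v_t$ pointwise on $(X_{\min,t},\infty)$, to the single analytic statement
\begin{equation*}
\lim_{j\to\infty} G[\tilde{v}^j_t](y) = G[v_t](y) = v_{t-\delta t}(y), \qquad y > X_{\min,t-\delta t}.
\end{equation*}
The base case $t = T-\delta t$ is the same density argument applied directly to the exact interpolation of $v_{T-\delta t}$, with no operator $G$ needed.

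This continuity of the one-period operator in its terminal data is where the work lies. One inequality is free: since $\tilde{v}^j_t \le v_t$ and the integrand $1+V(f(s))$ enters the gain with the positive weight $s_{t_0}\exp(-u(\gamma)\delta t)$, the operator $G$ is monotone in $V$, so $\limsup_j G[\tilde{v}^j_t](y) \le v_{t-\delta t}(y)$. For the reverse inequality I would fix $\epsilon>0$ and choose a near-optimal pair $(\gamma^*,f^*)$ for $G[v_t](y)$; feasibility in \eqref{eq:abstractInvestmentProblem} is a budget constraint independent of the terminal value, so it suffices to estimate ${\cal J}^A[\tilde{v}^j_t](\gamma^*,f^*)$ from below and the matter comes down to showing
\begin{equation*}
\int_0^1 \bigl(1+\tilde{v}^j_t(f^*(s))\bigr)\,\ed s \;\longrightarrow\; \int_0^1 \bigl(1+v_t(f^*(s))\bigr)\,\ed s .
\end{equation*}

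I expect this convergence to be the main obstacle, the delicate point being that the approximations $\tilde{v}^j_t$ equal $-\infty$ below their first grid node, whereas a near-optimal $f^*$ typically approaches the boundary $X_{\min,t}$ on a set of positive measure; consequently one cannot simply feed the exact $f^*$ into $\tilde{v}^j_t$. I would circumvent this by replacing $f^*$ with a truncation $f^*_{A,B} = \min(\max(f^*,A),B)$ for levels $X_{\min,t}<A<B<\infty$: capping below at $A$ raises the investment cost by an amount that tends to $0$ as $A\downarrow X_{\min,t}$ (by absolute continuity of the integral against $q^A_{\BS}$, using $f^*>X_{\min,t}$ a.e.), which can be restored by an arbitrarily small reduction in consumption, while capping above at $B$ only relaxes the budget; continuity of ${\cal J}^A[v_t]$ under these modifications keeps the value within $O(\epsilon)$ of $v_{t-\delta t}(y)$. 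For the resulting fixed strategy $f^*_{A,B}$ takes values in the compact interval $[A,B]\subset(X_{\min,t},\infty)$, on which $\tilde{v}^j_t\to v_t$ uniformly (pointwise convergence of concave functions on an open interval is uniform on compact subsets); hence $\tilde{v}^j_t(f^*_{A,B}(\cdot))\to v_t(f^*_{A,B}(\cdot))$ uniformly and boundedly, and bounded convergence gives the limit. Sending $A\downarrow X_{\min,t}$, $B\uparrow\infty$ and $\epsilon\downarrow 0$ then yields $\liminf_j G[\tilde{v}^j_t](y)\ge v_{t-\delta t}(y)$, completing the induction step; the density-plus-continuity argument of the first paragraph finally upgrades equality on ${\cal X}_\infty$ to equality on all of $(X_{\min,t-\delta t},\infty)$.
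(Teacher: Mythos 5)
Your argument is correct in substance but takes a genuinely different route from the paper's. The paper proves two continuity properties of the one-period solution operator acting on the \emph{terminal value function}: lower semi-continuity in the sup norm, and continuity under small horizontal translations $v\mapsto v_h$ (obtained from concavity of $h\mapsto\phi_{t_0,t_1,X_0}(v_h)$ at budgets above the minimum). Combining the two, it sandwiches the piecewise-linear approximant between the hypographs of $v$ and of a translated-and-lowered copy $v_\epsilon$, and then propagates an explicitly chosen sequence of tolerances $\epsilon_t$ backwards through the time grid, refining the grid as needed. You instead exploit the monotone convergence $\tilde v^j_t\uparrow v_t$ along the nested grids and prove continuity of the operator only along this sequence, perturbing the \emph{near-optimal strategy} rather than the value function: your truncation $f^*_{A,B}$ plays exactly the role that the horizontal translation $v_h$ plays in the paper, namely coping with the fact that $\tilde v^j_t=-\infty$ below its first grid node while $v_t$ may be finite there, so that sup-norm closeness fails near the left boundary. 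Your version buys a cleaner limit statement (no bookkeeping of the $\epsilon_t$ and of grid refinements) and rests on standard facts about concave functions (local uniform convergence on compact subsets of the interior); the paper's version yields a quantitative hypograph sandwich that indicates how fine a single fixed grid must be to achieve a prescribed accuracy.

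One step of yours needs patching: ``restoring the budget by an arbitrarily small reduction in consumption'' can fail if the near-optimal $\gamma^*$ sits at the left endpoint $\gamma_{\min}$ of $\dom u$, where $u$ drops to $-\infty$ and no reduction is admissible. The standard repair is to take the near-optimal pair for a slightly smaller budget $y'<y$ with $y'>X_{\min,t-\delta t}$ --- legitimate because the one-period value is concave in the budget and hence continuous at the interior point $y$ --- so that the extra cost of the lower truncation, which is at most $(A-X_{\min,t})e^{-r\delta t}\to 0$ since $f^*\geq X_{\min,t}$ a.e.\ and $q^A_{\BS}$ integrates to one, is absorbed by the budget slack $y-y'$ without touching $\gamma^*$ at all. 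With that adjustment the induction step closes as you describe.
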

\begin{proof}
	Let ${\cal V}$ denote the space of concave, increasing functions $v(x)$ which satisfy $v(x)=-\infty$ for $x<0$ and where $v(x)$ is bounded above by $0$.
	For two adjacent times $t_0, t_1=t_0+\delta t$ in our grid 
	we define a solution function $\phi_{t_0,t_1,X_0}:{\cal V} \to \R$ by
	setting
	\[
	\phi_{t_0,t_1,X_0}(v_{t_1})
	\]
	to equal the supremum in \eqref{eq:exponentialDynamicProgramming}. By composing these
	solution functions in the obvious way, we obtain a solution function $\phi_{t_0,t_1,X_0}$
	for any times in the grid with $t_0\leq t_1$.
	
	We define a corresponding minimum budget as follows:
	\[
	X_{\min,t_0,t_1}(v) = \sup\{ x \mid \phi_{t_0,t_1,x}(v) = \infty \}.
	\]
	
	Let $t_0, t_1$ be adjacent times in the grid.
	Given $v \in {\cal V}$ with $\phi_{t_0,t_1,X_0}(v)$ finite,
	let $(\gamma_{t_0}, \alpha) \in {\cal A}_{X_0,t_0,t_1}$
	be a maximizing strategy for the problem \eqref{eq:exponentialDynamicProgramming} with $v_{t_1}=v$.
	Suppose $w \in {\cal V}$. We have
	\begin{multline*}
	\Big|
	\exp\left(-u(\gamma_{t_0}) \delta t \right)
	\left( -1 + s_{t_0} 
	\E\left( 1+v(X^{(\gamma_{t_0},\alpha)}_1) \right)
	\right)
	- \\
	\exp\left(-u(\gamma_{t_0}) \delta t \right)
	\left( -1 + s_{t_0} 
	\E\left( 1+w(X^{(\gamma_{t_0},\alpha)}_1) \right)
	\right)
	\Big| \\
	\leq C \exp(-u_{\gamma_{t_0}}) \|v-w\|_\infty.
	\end{multline*}
	Hence for any $\epsilon>0$ we can find $\delta_1>0$ such that $\|v-w\|_\infty<\delta_1$ implies
	\[
	\phi_{t_0,t_1,X_0}(w) \geq \phi_{t_0,t_1,X_0}(v)-\epsilon.
	\]
	We have shown $\phi_{t_0,t_1,X_0}$ is lower semi-continuous in the $\sup$ norm for adjacent
	times $t_0$ and $t_1$. It follows that $\phi_{t_0,t_1,X_0}$ is lower semi-continuous for all $t_0<t_1$.
	
	Given $v \in {\cal V}$ and $h \in \R$, define the translation
	\[
	v_h(x)=\begin{cases}
	v(x-h) & x- h \geq 0 \\
	-\infty & x - h < 0.
	\end{cases} = \min \{ v(x-h), (\sup v) \id_{x-h<0} \}
	\]
	Define $f_{t_0,t_1,v}(h)=\phi_{t_0,t_1,X_0}(v_h)$. The function $v(x,h)=v_h(x)$
	is concave.
	Hence $f_{t_0,t_1,v}$
	is concave as a function of $h$. If $X_0>X_{\min,t_0,t_1}(v)$ then 
	$0 \in \ri f_{t_0,t_1,v}$, where $\ri f$ denotes the relative interior of $f$. Hence
	$f_{t_0,t_1,v}$ is continuous in $h$ at $0$. 
	
	Combining this with the lower semi-continuity result, we see that if $X_0>X_{\min,t_0,t_1}(v)$
	then given $\epsilon>0$, we can find $\delta_1>0$ and $\delta_2>0$ such that
	\[
	\phi_{t_0,t_1,x_0}(v_{\delta_1}(x) - \delta_2) \geq 
	\phi_{t_0,t_1,x_0}(v) - \epsilon.
	\]
	Let us write $v_\epsilon(x)$ for the function $v_{\delta 1}(x)-\delta_2$. Given a function $f$
	let us write $\Gamma_f$ for the {\em hypograph} of $f$, that is to say the set of points on or below the graph. We have $\Gamma_v \supseteq \Gamma_{v_\epsilon}$. For any function $w \in {\cal V}$ satisfying
	$\Gamma_v \supseteq \Gamma_w \supseteq \Gamma_v$ we will have
	\[
	\phi_{t_0,t_1,x_0}(v) \geq \phi_{t_0,t_1,x_0}(w) \geq \phi_{t_0,t_1,x_0}(v_\epsilon)
	\geq 
	\phi_{t_0,t_1,x_0}(v) - \epsilon.
	\]
	since it is clear that $\Gamma_w \supseteq \Gamma_w$ implies $\phi_{t_0,t_1,x_0}(v) \geq \phi_{t_0,t_1,x_0}(w)$. Note that we can always find a piecewise linear approximation
	between $\Gamma_v$ and $\Gamma_{v_\epsilon}$.
	
	Given a value for $\epsilon_0$, we may inductively extend this to a sequence of
	positive ${\epsilon_t}$ for $t \in {\cal T}$
	such that if our approximation $\tilde{v}_t$ satisfies
	$\Gamma_{v_t} \supseteq \Gamma_{\tilde{v}_t} \Gamma_{(v_t)_{\epsilon_t}}$
	then it will automatically satisfy 
	$\Gamma_{v_{t-\delta t}} \supseteq \Gamma_{\tilde{v}_{t-\delta t}} \supseteq \Gamma_{(v_t)_{\epsilon_{t-\delta t}}}$.
	By choosing a sufficiently fine grid we can ensure this condition is satisfied at time $T-\delta t$.
	By further refinements we may ensure that it is satisfied at all times.
\end{proof}

Let us now describe in full detail how to
solve \eqref{eq:abstractInvestmentProblem}
given that $v_{t_1}$ is of the form used in our algorithm.
In Proposition \ref{prop:logFormaule}, we will give the formulae
necessary to solve the problem on a computer in a format that addresses
numerical overflow issues.  Terms on the left hand side of the equations in the Proposition
should be stored in computer memory and can be computed without overflow issues from the terms on the right. We use infinite values for some terms as a convenient shorthand, terms such as an exponential of $-\infty$ should be interpreted in the obvious way.

To store probability values we define a bijection $L:[0,1]\to \R\cup \{\pm \infty \}$
by
\[
L(u) = \begin{cases}
\log(2u ) & u\leq0.5 \\
-\log(2-2u) & u> 0.5.
\end{cases}
\]
We note that the GNU scientific library contains a function {\tt gsl\_sf\_log\_erfc}
which computes the logarithm of the complementary error function which we can then use to compute $L(\Phi)$.

We define a function
\[
\tilde{u}(y) = \log( u^\dagger (e^y) ).
\]
For the specific functional form
\[
u(x)=\begin{cases}
a (x-x_0)^n + b & x \geq 0 \\
-\infty & \text{otherwise}
\end{cases}
\]
which we will use in our numerical examples, we may compute
$\tilde{u}$ without experiencing
overflow errors using the formulae
\begin{align}
\tilde{u}_0(p)&:= \frac{1}{n - 1}(p - \log(a \, n)), \\
\tilde{u}(y)&= \begin{cases}
\log( e^{\tilde{u}_0(p)} ) & x_0 = 0 \\
\log( e^{\tilde{u}_0(p)} + e^{\log(x_0)}) & x_0 > 0 \\
\log( e^{\tilde{u}_0(p)} - e^{\log(-x_0)} ) & \tilde{u}_0(p) > \log(-x_0) \text{ and } x_0 < 0 \\
-\infty & \tilde{u}_0(p) \leq \log(-x_0) \text{ and } x_0 < 0.
\end{cases}
\end{align}
We note the standard approach to computing the log of sums and differences of exponentials
without overflow issues should be used when evaluating
expressions such as this.

\begin{proposition}
	\label{prop:logFormaule}
	Let $v$ be a concave, non-positive, increasing function
	which is linear between grid points	in ${\cal X}=\{x_1,x_2, \ldots x_N\}$
	with $x_i$ strictly increasing. Suppose also that $v$ is equal
	to $-\infty$ on $(-\infty,x_1)$ and constant on $(x_N,\infty)$.
    Suppose that $u^\dagger$ is continuous and 
	satisfies equations \eqref{eq:uDaggerCts} and \eqref{eq:uDaggerP0}.
	
	Define a decreasing sequence of points $\log( p_i )$ by
	\begin{equation}
	\log( p_i ) = \begin{cases}
	\infty & i=0 \\
	\log (  e^{ - \ell( v(x_{i})) } -e^{-\ell(v(x_{i+1}))} ) - \log(x_{i+1}-x_{i}) & 0<i<N \\
	-\infty & i=N.
	\end{cases}
	\label{eq:defLogDualPoints}
	\end{equation}
	For a given value of $\log \eta$, define $L(U^\eta_i)$ and $L(Q^\eta_i)$ for $0<i<N$ by
	\begin{align}
	L(U^\eta_i) &= L \left( \Phi\left( -\frac{1}{2}M + \frac{1}{M} \left( \log (\eta ) - r \delta t - (1-C) \log(  s_{t_0} ) - \log( p_i ) \right) \right) \right),
	\label{eq:logUIExplicit} \\
	L(Q^\eta_i) &= L \left( \Phi\left( \frac{1}{2}M + \frac{1}{M} \left( \log (\eta ) - r \delta t - (1-C) \log(  s_{t_0} ) - \log( p_i ) \right) \right) \right).
	\label{eq:logQIExplicit}
	\end{align}
	Define $L(U^\eta_0)=L(Q^\eta_0)=-\infty$ and $L(U^\eta_N)=L(Q^\eta_N)=\infty$.
	We may then define the quantity $A^\eta$ by
	\[
	A^\eta
	=\log\left(
	e^{\log(1-s_{t_0})}+ \sum_{i=1}^N e^{ \log(s_{t_0}) + \log(-v(x_i)) + \log\left(e^{\log U^\eta_{i}}-e^{\log U^\eta_{i-1}} \right) }\right).
	\]
	We then have that
	\begin{equation}
	\log (\gamma^\eta) = \tilde{u}( \log(\eta) - \log(\delta t) - A^\eta )
	\label{eq:logGammaEtaExplicit}
	\end{equation}
	where $\gamma^\eta$ is as defined in \eqref{eq:defGammaEta}.
	We have
	\begin{equation}
	\log(X^\eta)= \log\left(
	e^{\log( \gamma^\eta)} +  \sum_{i=1}^N e^{C \log(s_{t_0}) -r \delta t + \log(x_i)+\log\left(e^{\log Q^\eta_i}-e^{\log Q^\eta_{i-1}} \right)}
	\right)
	\label{eq:logxEtaExplicit}
	\end{equation}
	and $X^\eta$ depends continuously upon $\eta$.
	If $X_0 > s^C_{t_0} e^{-r \delta t} x_1 + \gamma_{\min}$,
	we may find the value of $\eta_{X_0}$ by finding
	$\log(\eta)$ such that $\log(X^\eta)=\log(X_0)$.
	We then have
	\begin{align}
	\ell( v(t_0,X^0))=
	u(\gamma^\eta)  \delta t - A^\eta.
	\label{eq:logV0Explicit}
	\end{align}
	If $X_0 < s^C_{t_0} e^{-r \delta t} x_1$, the maximum in \eqref{eq:abstractInvestmentProblem} is $-\infty$ which is achieved by the negative consumption $\gamma = X_0 - s^C_{t_0} e^{-r \delta t} x_1$.
\end{proposition}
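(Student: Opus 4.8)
The plan is to substitute the piecewise-linear form of $v$ into the formulae of Proposition \ref{prop:solutionExponential} and to carry every quantity through a logarithm so that it can be evaluated without overflow. The entire argument is organised around one observation: since $v$ is concave and piecewise linear on $\{x_1,\dots,x_N\}$, its slope on $[x_i,x_{i+1}]$ is exactly $(v(x_{i+1})-v(x_i))/(x_{i+1}-x_i)$, and writing $v(x_i)=-e^{-\ell(v(x_i))}$ turns this into \eqref{eq:defLogDualPoints} with the named points $p_i$. The dual map $v^\dagger$ of Definition \ref{def:dagger} is then the step function taking the value $x_i$ on $(p_i,p_{i-1})$, the subdifferential of $v$ at the kink $x_i$ being the interval $[p_i,p_{i-1}]$; the conventions $p_0=\infty$ and $p_N=0$ encode the half-lines where $v=-\infty$ and where $v$ is constant.

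First I would compute the pricing kernel: differentiating $Q(u)=\Phi(M+\Phi^{-1}(u))$ with the Gaussian density gives $q^A_{\BS}(u)=\exp(-\tfrac12 M^2-M\Phi^{-1}(u))$, which is decreasing in $u$. Hence the argument $\eta\,s^{C-1}_{t_0}e^{-r\delta t}q^A_{\BS}(s)$ of $v^\dagger$ in \eqref{eq:defFEta} is decreasing in $s$, so $f^\eta$ is a nondecreasing step function: solving $\eta\,s^{C-1}_{t_0}e^{-r\delta t}q^A_{\BS}(s)=p_i$ for $s$ yields precisely $U^\eta_i$ as in \eqref{eq:logUIExplicit}, with $f^\eta(s)=x_i$ for $s\in(U^\eta_{i-1},U^\eta_i)$ and boundary values $U^\eta_0=0$, $U^\eta_N=1$. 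The crucial algebraic simplification is that $Q^\eta_i=Q(U^\eta_i)$, checked by substituting $\Phi^{-1}(U^\eta_i)=-\tfrac12 M+(\cdots)$ into the definition \eqref{eq:defQ}; this lets each $\Q$-cost integral collapse to a difference of $Q^\eta_i$'s.

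With these identifications the integrals of Proposition \ref{prop:solutionExponential} become finite sums. Because $\sum_i(U^\eta_i-U^\eta_{i-1})=1$, the consumption-independent factor satisfies $-1+s_{t_0}\int_0^1(1+v(f^\eta(s)))\,\ed s=-(1-s_{t_0})-s_{t_0}\sum_i(-v(x_i))(U^\eta_i-U^\eta_{i-1})=-e^{A^\eta}$, which identifies $A^\eta$ and reduces the argument of $u^\dagger$ in \eqref{eq:defGammaEta} to $\tfrac{\eta}{\delta t}e^{-A^\eta}$; applying $\tilde u(y)=\log u^\dagger(e^y)$ then gives \eqref{eq:logGammaEtaExplicit}. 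Similarly $\int_0^1 q^A_{\BS}(s)f^\eta(s)\,\ed s=\sum_i x_i(Q^\eta_i-Q^\eta_{i-1})$, so \eqref{eq:defXeta} becomes \eqref{eq:logxEtaExplicit}, and the value in Proposition \ref{prop:solutionExponential} equals $-\exp(-u(\gamma^\eta)\delta t+A^\eta)$, whose image under $\ell$ is \eqref{eq:logV0Explicit}. The log-sum-of-exponentials packaging in the statement is merely a rewriting of these sums that keeps each summand in floating-point range.

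The substantive part is the existence of $\eta_{X_0}$ and the stated range. Continuity of $\eta\mapsto X^\eta$ follows from continuity of $\Phi$ and the assumed continuity \eqref{eq:uDaggerCts} of $u^\dagger$. As $\log\eta\to+\infty$, $U^\eta_i,Q^\eta_i\to1$ for $0<i<N$, so $f^\eta\to x_1$ a.e.\ and the investment cost tends to $s^C_{t_0}e^{-r\delta t}x_1$, while $A^\eta$ stays bounded and $\tfrac{\eta}{\delta t}e^{-A^\eta}\to\infty$ forces $\gamma^\eta\to\gamma_{\min}$ by \eqref{eq:uDaggerPInf}, giving $X^\eta\to s^C_{t_0}e^{-r\delta t}x_1+\gamma_{\min}$; as $\log\eta\to-\infty$, $f^\eta\to x_N$ with bounded cost and $\gamma^\eta\to\infty$ by \eqref{eq:uDaggerP0}, so $X^\eta\to\infty$. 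The intermediate value theorem then produces $\eta_{X_0}$ whenever $X_0>s^C_{t_0}e^{-r\delta t}x_1+\gamma_{\min}$. For the degenerate case $X_0<s^C_{t_0}e^{-r\delta t}x_1$, any $f$ with $f\ge x_1$ a.e.\ (necessary to keep $\int v(f)$ finite) already costs more than $X_0$, forcing $\gamma<0\le\gamma_{\min}$ and hence gain $-\infty$, attained at $f\equiv x_1$, $\gamma=X_0-s^C_{t_0}e^{-r\delta t}x_1$. I expect the two genuinely delicate points to be justifying the step-function form of $f^\eta$ (including that the measure-zero set of $s$ where the argument of $v^\dagger$ meets a breakpoint $p_i$ is irrelevant) and controlling the limits $\log\eta\to\pm\infty$ tightly enough to pin down the exact endpoint $s^C_{t_0}e^{-r\delta t}x_1+\gamma_{\min}$; the algebraic substitutions, though lengthy, are routine.
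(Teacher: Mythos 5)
Your proposal is correct and follows essentially the same route as the paper's proof: identify $v^\dagger$ as a step function on the dual slopes $p_i$, compute $q^A_{\BS}(u)=\exp(-\tfrac12 M^2-M\Phi^{-1}(u))$ explicitly to obtain the $U^\eta_i$ and the identity $Q^\eta_i=Q(U^\eta_i)$, collapse the integrals of Proposition \ref{prop:solutionExponential} to finite sums, and combine continuity of $\eta\mapsto X^\eta$ with the limits as $\eta\to 0$ and $\eta\to\infty$ to invoke the intermediate value theorem. The only place you deviate is the $\eta\to\infty$ limit of the investment cost, where the paper (Lemma \ref{lemma:xForLargeEta}) uses the subdifferential bound $p\,v^\dagger(p)\le p x_1+v(x_N)-v(x_1)$ whereas you rely on $f^\eta\to x_1$ a.e.\ together with the bound $f^\eta\le x_N$ (dominated convergence); both are valid here since $v^\dagger$ takes only finitely many values.
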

\begin{proof}
	Corresponding to \eqref{eq:defLogDualPoints} we have a decreasing sequence of points $p_i$ given
	by
	\begin{equation}
	p_i = \begin{cases}
	\infty & i=0 \\
	\frac{v(x_{i+1})-v(x_{i})}{x_{i+1}-x_{i}} & 0<i<N \\
	0 & i=N.
	\end{cases}
	\label{eq:defDualPoints}
	\end{equation}
	We will then have
	\[
	v^\dagger(p) =
	\sum_{i=1}^N x_i \id_{[p_{i},p_{i-1})}(p).
	\]
	From \eqref{eq:defFEta}
	\[
	f^\eta(u) = \sum_{i=1}^N x_i \id_{[p_{i},p_{i-1})}\left( 
	\eta  s^{C-1}_{t_0} e^{-r \delta t} q^A_{\BS}(u) \right).
	\]
	Hence we will be able to deduce that
	\begin{equation}
	f^\eta(U) = \sum_{i=1}^N x_i \id_{(U^\eta_{i-1},U^\eta_i]}\left( 
	U \right)
	\label{eq:fEtaAsSum}
	\end{equation}
	if we can show \eqref{eq:logUIExplicit} ensures that
	\begin{equation}
	\eta  s^{C-1}_{t_0} e^{-r \delta t} q^A_{\BS}(U^\eta_i) = p_i.
	\label{eq:uIImplicit}
	\end{equation}
	Writing $\phi$ for the pdf of the standard normal we compute
	\begin{align*}
	q^A_{\BS}(u) &=
	\frac{\phi( M + \Phi^{-1}(u))}{\phi( \Phi^{-1}(u))} \\
	&= \exp\left( \frac{1}{2}( \Phi^{-1}(u)^2 - (M + \Phi^{-1}(u))^2 )\right) \\
	&= \exp\left( -\frac{1}{2}M^2 - M \Phi^{-1}(u) \right).
	\end{align*}
	Hence equation \eqref{eq:uIImplicit} is equivalent to
	\begin{equation}
	U^\eta_i = \Phi\left( -\frac{1}{2}M - \frac{1}{M} \log \left( \frac{1}{\eta} s^{1-C}_{t_0} e^{r \delta t} p_i \right) \right).
	\label{eq:uIExplicit}
	\end{equation}
	for $0<i<N$, which will hold due to our definition \eqref{eq:logUIExplicit}.
	From \eqref{eq:defGammaEta} and \eqref{eq:fEtaAsSum} we have
	\begin{align}
	\gamma^\eta  &= u^\dagger \left( -\frac{\eta}{\delta t}
	\left(
	-1
	+ s_{t_0} \int_0^1 \left(1+v\left( \sum_{i=1}^N x_i \id_{(U^\eta_{i-1},U^\eta_i]}(s) \right) \right) \, \ed s \right)^{-1}
	\right) \nonumber \\
	&= u^\dagger \left( -\frac{\eta}{\delta t}
	\left(
	-1
	+ s_{t_0} \int_0^1 \left(1+ \sum_{i=1}^N v(x_i) \id_{(U^\eta_{i-1},U^\eta_i]}(s) \right) \, \ed s \right)^{-1}
	\right) \nonumber \\
	&= u^\dagger \left( -\frac{\eta}{\delta t}
	\left(
	-1
	+ s_{t_0} \left(1+ \sum_{i=1}^N v(x_i) (U^\eta_i-U^\eta_{i-1})\right) \right)^{-1}
	\right).
	\end{align}
	Equation \eqref{eq:logGammaEtaExplicit} follows immediately.
	
	Use \eqref{eq:defXeta} and \eqref{eq:defQ} to see that
	\begin{align}
	X^\eta &= \gamma^\eta + s^C_{t_0} \sum_{i=1}^N \int_{U^\eta_{i-1}}^{U^\eta_i} e^{-r \delta t} q^A_{\BS}(s) x_i \, \ed s \nonumber \\
	&= \gamma^\eta + s^C_{t_0} \sum_{i=1}^N e^{-r \delta t} x_i (Q(U^\eta_i)-Q(U^\eta_{i-1})) \nonumber \\
	&= \gamma^\eta + s^C_{t_0} \sum_{i=1}^N e^{-r \delta t} x_i (Q^\eta_i-Q^\eta_{i-1}))
	\label{eq:xEtaExplicit}
	\end{align}
	The last line follows directly from our definitions of $Q$, $U^\eta_i$ and $Q^\eta_i$.
	We now see that equation \eqref{eq:xEtaExplicit} is equivalent to
	\eqref{eq:logxEtaExplicit}.
	
	Our explicit formula, \eqref{eq:uIExplicit}, for $\gamma^\eta$ shows that it depends continuously $\eta$	given the assumption \eqref{eq:uDaggerCts}.
	It then follows from equation \eqref{eq:xEtaExplicit} that $X^\eta$ depends continuously on $\eta$.
	Lemmas \eqref{lemma:xForSmallEta} and \eqref{lemma:xForLargeEta} below then establish
	that we can solve for $\eta$ in $X^\eta = X_0$ under the conditions of the proposition.
	
	The value function
	is then given by
	\begin{align*}
	v(t_0,X^0)&=
	\exp( -u(\gamma^\eta)  \delta t ) \left(
	-1
	+ s_{t_0} \int_0^1 (1+v( 
	\sum_{i=1}^N x_i \id_{(U^\eta_{i-1},U^\eta_i]}\left( 
	s \right)
	)) \, \ed s \right) \\
	&=
	\exp( -u(\gamma^\eta)  \delta t ) \left(
	-1
	+ s_{t_0}(1+ 
	\sum_{i=1}^N v(x_i) (U^\eta_{i}-U^\eta_{i-1}) ) \right)
	\end{align*}
	and so \eqref{eq:logV0Explicit} also follows.
\end{proof}

\begin{lemma}
\label{lemma:xForSmallEta}	
Under the assumptions of Proposition \ref{prop:logFormaule},
\[
\lim_{\eta \to 0} X^\eta = \infty.
\]	
\end{lemma}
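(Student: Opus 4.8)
The plan is to track each ingredient of the explicit formula \eqref{eq:xEtaExplicit} for $X^\eta$ derived in the proof of Proposition \ref{prop:logFormaule}, and to show that the consumption term $\gamma^\eta$ already blows up while the investment term merely stays nonnegative. First I would fix an index $i$ with $0<i<N$. By \eqref{eq:defDualPoints} the slope $p_i = (v(x_{i+1}) - v(x_i))/(x_{i+1} - x_i)$ is a finite, strictly positive constant independent of $\eta$ (strict positivity since $v$ is strictly increasing between adjacent grid points). Hence in \eqref{eq:uIExplicit} the quantity $\log(\tfrac{1}{\eta} s^{1-C}_{t_0} e^{r\delta t} p_i) = -\log\eta + \text{const}$ tends to $+\infty$ as $\eta \to 0^+$, so the argument of $\Phi$ tends to $-\infty$ and $U^\eta_i \to 0$; the identical computation with $+\tfrac{1}{2}M$ in place of $-\tfrac{1}{2}M$ gives $Q^\eta_i \to 0$. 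Since $U^\eta_0 = 0$ and $U^\eta_N = 1$ by definition, the increments satisfy $U^\eta_i - U^\eta_{i-1} \to 0$ for $1 \le i \le N-1$ while $U^\eta_N - U^\eta_{N-1} \to 1$.

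Next I would feed this into \eqref{eq:defGammaEta}. Because the increments $U^\eta_i - U^\eta_{i-1}$ are nonnegative and sum to $1$, the sum $\sum_{i=1}^N v(x_i)(U^\eta_i - U^\eta_{i-1})$ is a convex combination of the $v(x_i)$, and by the limits above it converges to $v(x_N)$. Thus the inner bracket $-1 + s_{t_0}(1 + \sum_i v(x_i)(U^\eta_i - U^\eta_{i-1}))$ converges to $K := -1 + s_{t_0}(1 + v(x_N))$. Provided $v(x_N)<0$ or $s_{t_0}<1$ (both of which hold in the exponential setting of the algorithm), one has $s_{t_0}(1 + v(x_N)) < 1$ and hence $K < 0$, so the bracket stays bounded away from $0$ for all sufficiently small $\eta$. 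Consequently the argument $-\tfrac{\eta}{\delta t}(\text{bracket})^{-1}$ of $u^\dagger$ in \eqref{eq:defGammaEta} is positive and tends to $0$ as $\eta \to 0^+$, and assumption \eqref{eq:uDaggerP0} gives $\gamma^\eta = u^\dagger(\cdots) \to \infty$.

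Finally, the investment term $s^C_{t_0} e^{-r\delta t}\sum_{i=1}^N x_i(Q^\eta_i - Q^\eta_{i-1})$ in \eqref{eq:xEtaExplicit} is nonnegative, since the grid points $x_i$ are positive and the increments $Q^\eta_i - Q^\eta_{i-1}$ are nonnegative (the $Q^\eta_i$ increase in $i$ because $p_i$ decreases). Therefore $X^\eta \ge \gamma^\eta \to \infty$, which is the claim. The one delicate point — and the only real obstacle — is establishing the limit of the bracket and, in particular, its strict separation from $0$: this keeps the reciprocal finite and places the argument of $u^\dagger$ firmly in the regime $p \to 0^+$ where \eqref{eq:uDaggerP0} applies, and it is here that one must invoke the strict negativity of the value function (or positive one-period mortality, $s_{t_0}<1$) rather than mere non-positivity.
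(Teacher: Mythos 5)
Your proposal is correct and follows the same route as the paper: show the bracket in \eqref{eq:defGammaEta} is bounded away from zero, deduce $\gamma^\eta\to\infty$ from \eqref{eq:uDaggerP0}, and conclude via $X^\eta\geq\gamma^\eta$ since the investment term in \eqref{eq:defXeta} is nonnegative. The only difference is cosmetic: you compute the exact limit of the $U^\eta_i$ and of the bracket, whereas the paper gets the needed separation from zero more directly from $v\leq 0$, which gives $-1+s_{t_0}\int_0^1(1+v(f^\eta(s)))\,\ed s\leq -1+s_{t_0}<0$ uniformly in $\eta$.
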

\begin{proof}
Our assumptions on $v$ ensure that
\[
-1 +s_{t_0} \int_0^1 (1 + v(f^\eta(s))) \, \ed s \leq -1+s_{t_0} <0.
\]	
Hence
\[
0 > \left( -1 +s_{t_0} \int_0^1 (1 + v(f^\eta(s))) \right)^{-1} < \frac{1}{-1+s_{t_0}}.
\]
It now follows from our equation \eqref{eq:uDaggerP0} coupled with equation \eqref{eq:defGammaEta}
that
\[
\lim_{\eta \to 0} \gamma^\eta = \infty.
\]
The result now follows from \eqref{eq:defXeta}.
\end{proof}
\begin{lemma}
Under the assumptions of Proposition \ref{prop:logFormaule}, 
\[
\lim_{\eta \to \infty} \gamma^\eta = \gamma_{\min}.
\]	
\label{lemma:limGammaEta}
\end{lemma}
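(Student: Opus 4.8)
The plan is to reduce everything to the asymptotics of the argument of $u^\dagger$ in the definition \eqref{eq:defGammaEta} of $\gamma^\eta$, and then to invoke the limit \eqref{eq:uDaggerPInf} which states $\lim_{p\to\infty} u^\dagger(p) = \gamma_{\min}$. Introduce the abbreviation
\[
B(\eta) := -1 + s_{t_0} \int_0^1 \bigl(1 + v(f^\eta(s))\bigr) \, \ed s,
\]
so that \eqref{eq:defGammaEta} reads $\gamma^\eta = u^\dagger\bigl(-\tfrac{\eta}{\delta t} B(\eta)^{-1}\bigr)$. The goal is then to show that $-\tfrac{\eta}{\delta t}B(\eta)^{-1} \to +\infty$ as $\eta \to \infty$, and for this it suffices to obtain two-sided bounds keeping $B(\eta)$ negative, bounded away from $0$, and bounded away from $-\infty$, all uniformly in $\eta$.

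For the upper bound, the computation already carried out in the proof of Lemma \ref{lemma:xForSmallEta} applies verbatim and uses only $v \leq 0$: it gives $B(\eta) \leq -1 + s_{t_0} < 0$. For the lower bound I would exploit the piecewise-linear setting of Proposition \ref{prop:logFormaule}, in which $v$ equals $-\infty$ on $(-\infty,x_1)$. Consequently $\partial v(x) = \emptyset$ for $x < x_1$, so the set $\{x \mid p \in \partial v(x)\}$ over which the infimum in Definition \ref{def:dagger} is taken is contained in $[x_1,\infty)$, whence $v^\dagger(p) \geq x_1$ for every $p>0$. Since $f^\eta(s) = v^\dagger(\cdot) \geq x_1$ by \eqref{eq:defFEta} and $v$ is increasing, we get $v(f^\eta(s)) \geq v(x_1)$ for all $s$, and therefore
\[
B(\eta) \;\geq\; -1 + s_{t_0}\bigl(1 + v(x_1)\bigr),
\]
a finite negative constant. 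Combining the two bounds shows that $|B(\eta)|$ lies in the fixed positive interval $\bigl[\,1 - s_{t_0},\; 1 - s_{t_0}(1+v(x_1))\,\bigr]$.

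With $B(\eta)<0$ established, the argument of $u^\dagger$ is $-\tfrac{\eta}{\delta t}B(\eta)^{-1} = \tfrac{\eta}{\delta t\,|B(\eta)|}$, which is positive, and the uniform upper bound on $|B(\eta)|$ yields
\[
\frac{\eta}{\delta t\,|B(\eta)|} \;\geq\; \frac{\eta}{\delta t\,\bigl(1 - s_{t_0}(1+v(x_1))\bigr)} \;\xrightarrow[\eta\to\infty]{}\; \infty .
\]
Applying \eqref{eq:uDaggerPInf} then gives $\gamma^\eta = u^\dagger\bigl(\tfrac{\eta}{\delta t\,|B(\eta)|}\bigr) \to \gamma_{\min}$, which is the claim.

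I expect the one genuinely load-bearing step to be the lower bound on $B(\eta)$. The upper bound and the final limit are immediate, but without the lower bound one could not exclude the possibility that $|B(\eta)| \to \infty$, which would leave the sign and magnitude of the argument of $u^\dagger$ undetermined and break the conclusion that it diverges. The key structural fact making this work is that $f^\eta$ can never dip below the left endpoint $x_1$ of the domain of $v$, so the integrand $v(f^\eta(s))$ stays bounded below by the finite value $v(x_1)$; everything else is routine.
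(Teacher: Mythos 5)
Your proof is correct and follows essentially the same route as the paper's: bound the bracketed term $B(\eta)$ away from $0$ and $-\infty$ so that the argument of $u^\dagger$ in \eqref{eq:defGammaEta} tends to $+\infty$, then invoke \eqref{eq:uDaggerPInf}. In fact you are somewhat more careful than the paper, which only asserts that $B(\eta)^{-1}$ ``is bounded''; your lower bound $B(\eta) \geq -1 + s_{t_0}(1+v(x_1))$, obtained from $v^\dagger \geq x_1$, is exactly the point needed to guarantee that $\eta/(\delta t\,|B(\eta)|)$ actually diverges.
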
	
\begin{proof}
Our assumptions on $v$ ensure that
\[
\left(
-1
+ s_{t_0} \int_0^1 (1+v(f^\eta(s))) \, \ed s \right)^{-1}
\]
is bounded. Hence using the expression \eqref{eq:defGammaEta}
combined with assumption \eqref{eq:uDaggerPInf} we find $\gamma^\eta \to 0$
as $\eta \to \infty$.
\end{proof}
\begin{lemma}
Under the assumptions of Proposition \ref{prop:logFormaule}, 
\[
\lim_{\eta \to \infty} X^\eta = \gamma_{\min} + s_{t_0}^C e^{-r \delta t } x_1 .
\]		
\label{lemma:xForLargeEta}	
\end{lemma}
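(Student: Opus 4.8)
The plan is to start from the explicit closed form for $X^\eta$ derived in the proof of Proposition \ref{prop:logFormaule}, namely \eqref{eq:xEtaExplicit},
\[
X^\eta = \gamma^\eta + s^C_{t_0} \sum_{i=1}^N e^{-r \delta t} x_i (Q^\eta_i - Q^\eta_{i-1}),
\]
and to compute the limit of the two contributions separately. The consumption term is immediate: Lemma \ref{lemma:limGammaEta} already gives $\gamma^\eta \to \gamma_{\min}$ as $\eta \to \infty$. So the whole problem reduces to showing that the investment term converges to $s^C_{t_0} e^{-r \delta t} x_1$, i.e. that $\sum_{i=1}^N x_i (Q^\eta_i - Q^\eta_{i-1}) \to x_1$.

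To handle the investment term I would track the behaviour of the thresholds $U^\eta_i$, and hence of $Q^\eta_i = Q(U^\eta_i)$, as $\eta \to \infty$. Using the explicit formula \eqref{eq:uIExplicit}, for $0 < i < N$ the argument of $\Phi$ is $-\tfrac12 M - \tfrac1M \log\!\big(\tfrac1\eta s^{1-C}_{t_0} e^{r\delta t} p_i\big)$; since each interior slope $p_i$ (defined in \eqref{eq:defDualPoints}) is finite and $M>0$, this argument tends to $+\infty$, so $U^\eta_i \to 1$. The boundary thresholds are fixed by convention, $U^\eta_0 = 0$ and $U^\eta_N = 1$. By continuity of $Q$ on $[0,1]$, together with $Q(1) = 1$ and $Q(0) = 0$ read off from \eqref{eq:defQ}, it follows that $Q^\eta_i \to 1$ for every $1 \le i \le N$ while $Q^\eta_0 = 0$. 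Feeding this into the finite sum, only the $i=1$ term survives: $x_1(Q^\eta_1 - Q^\eta_0) \to x_1$, whereas each term with $i \ge 2$ has $Q^\eta_i - Q^\eta_{i-1} \to 1 - 1 = 0$. Hence the sum converges to $x_1$, and combining with $\gamma^\eta \to \gamma_{\min}$ yields the claimed limit.

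The computation is short; the points that need care are all bookkeeping. The main subtlety is that all of the interior thresholds $U^\eta_i$ collapse to the single endpoint $1$, so one cannot reason about the individual differences $U^\eta_i - U^\eta_{i-1}$ directly; the clean route is to pass through $Q$ and exploit that there are only finitely many indices, so pointwise continuity of $Q$ at $u=1$ suffices and no uniform estimate is needed. I would also treat the degenerate case $\mu = r$ (so $M = 0$) separately: there $Q(u) = u$ and the market is trivial, and the same conclusion follows by an even simpler argument. Finally, one should check that the argument is unaffected if some interior $p_i$ vanishes, which can occur where $v$ is locally flat: in that case $\log p_i = -\infty$ forces $U^\eta_i = 1$ for all $\eta$, which is consistent with the limit above.
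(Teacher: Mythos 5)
Your proof is correct, but it takes a genuinely different route from the paper's. You exploit the explicit finite-sum representation \eqref{eq:xEtaExplicit} that is available because $v$ is piecewise linear on a finite grid: you show each interior threshold $U^\eta_i \to 1$ directly from \eqref{eq:uIExplicit}, push this through $Q$, and compute the limit of the sum term by term, so that only the $i=1$ contribution $x_1(Q^\eta_1 - Q^\eta_0)\to x_1$ survives. The paper instead works with the integral form \eqref{eq:defXeta}: it introduces a cutoff $s^*_\eta$ defined via $p^*=\inf\partial v(x_1)$, splits $\int_0^1 q^A_{\BS}(s)f^\eta(s)\,\ed s$ at $s^*_\eta$, and controls the tail using the subgradient inequality $p\,v^\dagger(p)\leq p x_1 + v(x_N)-v(x_1)$; this yields only $\liminf_{\eta\to\infty}\int_0^1 q^A_{\BS}(s)f^\eta(s)\,\ed s \leq x_1$, which is then combined with the uniform lower bound $X^\eta \geq \gamma_{\min}+s^C_{t_0}e^{-r\delta t}x_1$. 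What each approach buys: the paper's argument does not really use the piecewise-linear structure and so would survive a generalization of the class of value functions, whereas yours is tied to the grid representation --- but the grid representation is exactly the hypothesis of Proposition \ref{prop:logFormaule}, and your term-by-term computation is shorter, delivers the genuine two-sided limit in one pass, and correctly flags the only delicate points (the collapse of all interior thresholds to the single endpoint $1$, which is harmless because the index set is finite; the degenerate case $M=0$; and flat segments with $p_i=0$). Both proofs rely on Lemma \ref{lemma:limGammaEta} in the same way for the consumption term.
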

\begin{proof}
Define 
\[
p^* = \inf \partial v(x_1).
\]	
For $\eta >0$, define
\begin{equation}
s^*_\eta = q^A_{\BS}(p^* \eta^{-1} s^{1-C}_{t_0} e^{rt}),
\label{def:sStarEta}
\end{equation}
which ensures that
\begin{equation}
s \geq s^* \iff \eta s^{C-1}_{t_0} e^{rt} q^A_{\BS}(s) < p^{\star}.
\label{eq:sStarCondition}
\end{equation}
We compute
\begin{align}
\int_0^1 q^A_{\BS}(s) f^\eta(s) \ed s 
&= \int_0^1 q^A_{\BS}(s) v^\dagger( \eta s^{C-1}_{t_0} e^{-rt} q^A_{\BS}(s) ) \, \ed s
\nonumber \\ 
&= 
\int_0^{s^*_\eta} q^A_{\BS}(s) v^\dagger( \eta s^{C-1}_{t_0} e^{-rt} q^A_{\BS}(s) ) \, \ed s
\nonumber \\
&\qquad + \frac{1}{\eta s^{C-1}_{t_0} e^{-rt}} \int_{s^*_\eta}^1 \eta s^{C-1}_{t_0} e^{-rt} \, q^A_{\BS}(s) v^\dagger( \eta s^{C-1}_{t_0} e^{-rt} q^A_{\BS}(s) ) \, \ed s
\nonumber \\
&\leq 
\int_0^{s^*_\eta} q^A_{\BS}(s) x_1 \, \ed s
\nonumber \\
&\qquad + \frac{1}{\eta s^{C-1}_{t_0} e^{-rt}} \int_{s^*_\eta}^1 \eta s^{C-1}_{t_0} e^{-rt} \, q^A_{\BS}(s) v^\dagger( \eta s^{C-1}_{t_0} e^{-rt} q^A_{\BS}(s) ) \, \ed s 
\label{eq:xEtaInequality1} \\
\end{align}

We note that $p \in \partial v( v^\dagger(p))$.
By the definition of the subdifferential at $v^\dagger(p)$
\[
v(x) \leq v(v^\dagger(p)) + p( x - v^\dagger(p)).
\]
Rearranging yields
\[
p v^\dagger(p) \leq p x + v(v^\dagger(p)) - v(x).
\]
Using the fact $v$ is increasing and substituting $x_1$ for $x$ we find that for all $p$
\[
p v^\dagger(p) \leq p x_1 + v(x_N) - v(x_1).
\]
Using this inequality in \eqref{eq:xEtaInequality1}  we find
\begin{align}
\int_0^1 q^A_{\BS}(s) f^\eta(s) \ed s 
&\leq 
\int_0^{s^*_\eta} q^A_{\BS}(s) x_1 \, \ed s 
\nonumber \\
&\qquad + 
\frac{1}{\eta s^{C-1}_{t_0} e^{-rt}}
\int_{s^*_\eta}^1 ( \eta s^{C-1}_{t_0} e^{-rt} \, q^A_{\BS}(s) x_1 + v(x_N) - v(x_1)) \, \ed s
\nonumber \\
&\leq 
\int_0^{s^*_\eta} q^A_{\BS}(s) x_1 \, \ed s
\nonumber \\
&\qquad + 
\frac{1}{\eta s^{C-1}_{t_0} e^{-rt}}
\int_{s^*_\eta}^1 ( p^* x_1 + v(x_N) - v(x_1)) \, \ed s.
\label{eq:xEtaInequality2}
\end{align}
by \eqref{eq:sStarCondition}. From \eqref{def:sStarEta}
\[
\lim_{\eta \to \infty} s^*_\eta = 1.
\]
We may therefore take the limit of the inequality \eqref{eq:xEtaInequality2}
to find
\[
\liminf_{\eta>0}
\int_0^1 q^A_{\BS}(s) f^\eta(s) \ed s  \leq x_1.
\]
Using this, Lemma \ref{lemma:limGammaEta} and the definition of $X^\eta$ in
equation \eqref{eq:defXeta} we find
\[
\liminf_{\eta>0} X^\eta \leq \gamma_{\min} + s_{t_0}^C e^{-r \delta t} x_1.
\]
From \eqref{eq:defGammaEta} and \eqref{eq:defXeta} one sees that, on the other hand, 
for all $\eta>0$ we have
\[
X^\eta \geq \gamma_{\min} + s_{t_0}^C e^{-r \delta t} x_1.
\]
The result follows.
\end{proof}

\begin{remark}
	We note that that if we follow the optimal investment
	strategy at time $t$, then the optimal investment strategy will result
	in a wealth at time $t+\delta t$ which takes values in the grid $\{x_1,\ldots,x_n\}$.
	We may then approximate the value function on the space-time grid
	$\{x_1,\ldots x_n \} \times \{0,\delta t, 2 \delta t, \ldots, T\}$.
	One can then obtain a simulation of the optimal strategy by first simulating 
	the stock price on the time grid and then computing the corresponding dynamics
	of $x_t$ in the grid $\{x_1,\ldots x_n \}$ using this approximation to the value
	function. Since the wealth process never leaves a fixed space-time grid, we can use
	the same approximation of the value function for all the scenarios.
\end{remark}

\begin{remark}
	When implementing this algorithm we notice that many values
	of $U_i^\eta$ will be extremely close to either 0 or 1, and so including
	these terms will have a negligible effect on the values of the sums
	in the equations \eqref{eq:logGammaEtaExplicit}, \eqref{eq:logxEtaExplicit}.
	Financially this is equivalent to ignoring extreme events of
	very low probability where the
	$\P$ and $\Q$ disagree by a large amount. Since our payoff functions $f$
	take values in ${\cal X}$, and so are bounded and positive, ignoring
	these extreme events will have no material impact upon either the price or the expected utility.
	The value we chose in our numerical calculations was $\epsilon=10^{-10} \max{|v(x_i)|}^{-1}$.
	
	This can be used to speed up the algorithm. When
	calculating $X^\eta$, choose some small $\epsilon$ and define
	\begin{align*}	
	i_{\min}&:=\max \{1\} \cup \{ i \mid U_i<\epsilon\} \\
	i_{\max}&:=\min  \{N\} \cup \{ i \mid U_i>1-\epsilon\}.
	\end{align*}
	To compute these values and the values of $U_i$,
	first use the method of bisection to find some
	$i^*$ where $\epsilon<U_{i^*}<1-\epsilon$. Then compute the values of $U_i$
	from $i^*$ down to $i_{\min}$, stopping when $U_i<\epsilon$. Similarly compute the values of $U_i$ from $i^*$ up to $i_{\max}$, stopping when $U_i>1-\epsilon$.
	No other values of $U_i$ outside the range $i_{\min-1}\leq i \leq i_{\max}$ are then needed in the computation of $X^\eta$.
	
	When computing the values of the sums in \eqref{eq:logGammaEtaExplicit}, \eqref{eq:logxEtaExplicit} use indices running from $i_{\min}$ to $i_{\max}$ rather than form $1$ to $n$. 
\end{remark}

\section{Numerical Tests}
\label{sec:numericalTests}

We refer to the paper \cite{ab-main} for detailed numerical
results for a model calibrated with realistic model parameters.

In this
paper we will restrict ourselves to showing how our algorithm can
be tested using by making a comparison with the analytic results
of Section \ref{sec:mattress} and \cite{ab-ez}.

We begin with a comparison with the results of \cite{ab-ez}.

If we choose $u(x)$ to represent an individual with 
low satisfaction-risk-aversion, one would expect the consumption in the exponential Kihlstrom--Mirman
case to be very close to the curves for von Neumann--Morgernstern preferences.

In Figure \ref{fig:exp-optimal-consumption-collective} we have plotted a fan diagram
for the optimal consumption
in the individual case. The model and algorithm parameters used are given in 
Table \ref{table:parameters}.

The smoother lines in this fan diagram shows empirical consumption percentiles $(1,5,50,95,99)$ obtained for a sample of $10^5$ scenarios where the optimal strategy was followed. The jagged
line shows a single illustrative scenario.

Figure \ref{fig:exp-optimal-consumption-vnm} shows the consumption when computed 
analytically for an individual with von Neumann--Morgernstern preferences.
As expected, the differences from  Figure \ref{fig:exp-optimal-consumption-collective} are small.

\begin{table}[htbp]
\begin{center}
\begin{tabular}{ll} \toprule
Parameter & Value \\	\midrule
$\delta_t$ & $0.02$ \\	
$r$ & $0.02$ \\	
$\mu$ & $0.05$ \\	
$\sigma$ & $0.15$ \\	
$X_0$ & $65$ \\	
$u(x)$ & $-\frac{1}{10} x^{-2}$ \\	
$w(x)$ & $-\exp(-x)$ \\	
Number of grid points & $1001$ \\	
Grid range & $[x_1,x_n]=[0,195]$ \\	
$p_t$ & CMI\_2018\_F [1.5\%] (see \cite{cmi2018}) \\ \bottomrule
\end{tabular}	
\end{center}	
\label{table:parameters}
\caption{Parameter values used in the numerical examples}
\end{table}

The fan diagram Figure \ref{fig:exp-wealth-fan-collective} shows $(1,5,50,95,99)-th$ percentiles for the  total wealth $X_t$ at each time. In particular this shows that the wealth was unlikely to reach the top boundary. When using the algorithm one should run a check such as this as otherwise one would expect
the value to be significantly suboptimal.

\begin{figure}[p]
	\centering
	\includegraphics{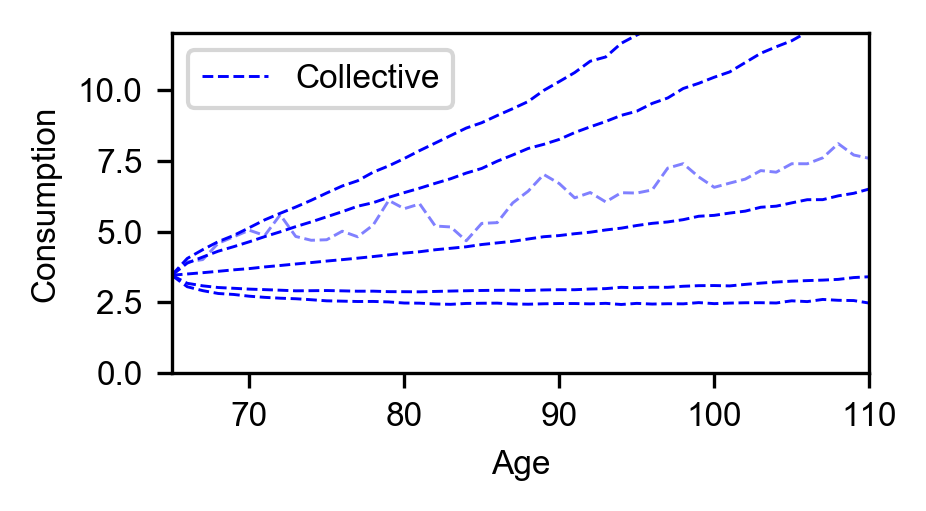}
	\caption{Fan diagram of the optimal consumption for an infinite collective.}
	\label{fig:exp-optimal-consumption-collective}
\end{figure}

\begin{figure}[p]
	\centering
	\includegraphics{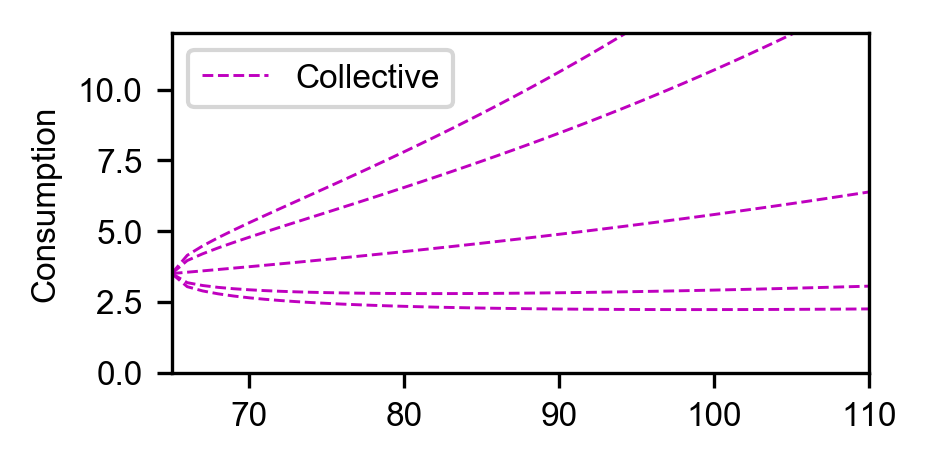}
	\caption{Fan diagram of the optimal consumption for an infinite collective
		 using von Neumann--Morgernstern preferences.}
	\label{fig:exp-optimal-consumption-vnm}
\end{figure}

\begin{figure}[p]
	\centering
	\includegraphics{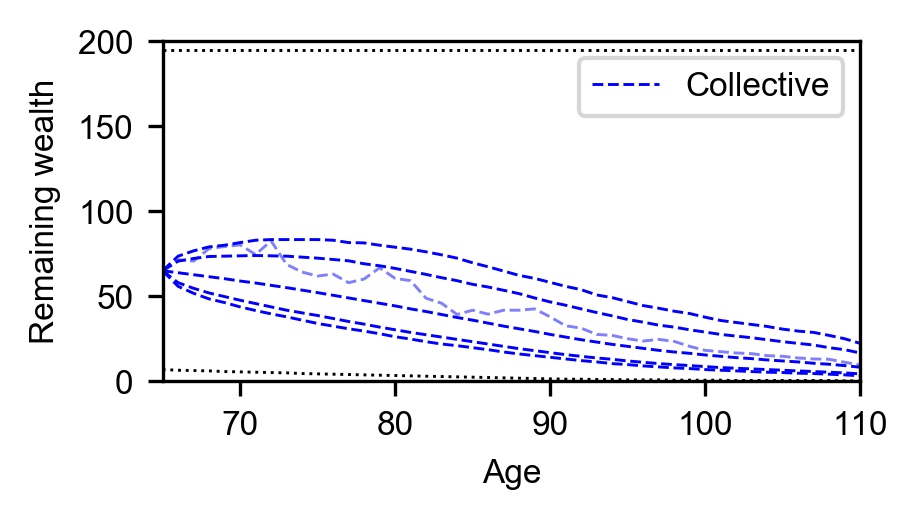}
	\caption{Fan diagram of the remaining wealth per individual at each time. The dotted lines 
		indicate the domain of the approximation to $v_t$ at each time point.}
	\label{fig:exp-wealth-fan-collective}
\end{figure}

\medskip

Next we compare the analytic results of Section \ref{sec:mattress} with the numerical results.
We consider the continuous time optimal investment of an individual with initial wealth $B=3$,
exponential mortality with $\lambda=0.025$
and preferences of the form with $a=0.05$, $b=-0.01$, $k=0.5$. In a market
with $\mu=r=0$ we may compute their optimal consumption analytically. If we instead
consider a market with $r=0$, $\mu=0.0001$ and $\sigma=1.0$ we expect that the optimal
strategy in this case will be very close to that when $\mu=0.0001$. Given our choice
of $\lambda$, we expect that a discrete time approximation to the continuous time 
problem with $\delta t=1$ and $T=200$ should also be accurate. 

In Figure \ref{fig:exp-consumption_analytic} we plot the analytic curve and a fan
diagram for the discrete time approximation with the given market parameters. Again,
we used a grid size of $1000$ for our computation. As
expected the analytic curve and the fan diagram are very close.

\medskip

These tests give us a good degree of confidence in both our analytic formulae
and numerical implementation. Other tests we have performed include: confirming that 
if one simulates in price paths in the $\Q$ measure, the discounted expected consumption
matches the initial budget; checking that the value function is concave and increasing
in the budget at all times $t$; comparison with brute-force computation of the deterministic
optimal investment strategy in the case where $\mu=r$.

\begin{figure}[p]
	\centering
	\includegraphics{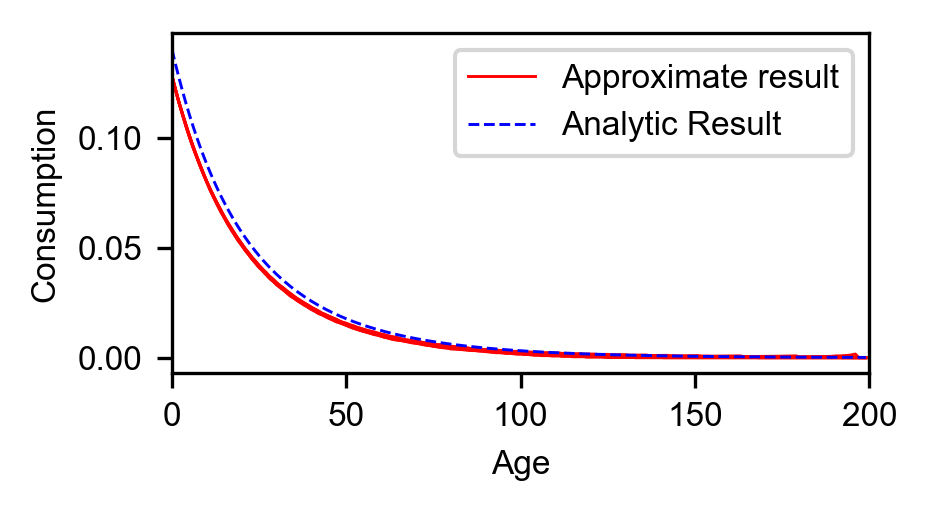}
	\caption{The optimal consumption for an example problem computed analytically
	and using a numerical approximation}
	\label{fig:exp-consumption_analytic}
\end{figure}

\section{Convergence of $v_n$ as $n\to \infty$}
\label{sec:exponentialConvergence}

In this section we will prove that $v_n \to v_\infty$ as $n\to \infty$ under mild assumptions.

We will assume that the
function $u$ in \eqref{eq:kihlstromMirmanGain}
is, up to affine transformation, a CRRA utility function.
Without loss of generality we may ignore the scale factor
and so we assume that for some constants $c$, $a<1$, $a \neq 0$
\begin{equation}
u(x) = \begin{cases}
x^a + c & 0<a<1 \\
-x^\alpha + c & a<0.
\end{cases}
\label{eq:crra}
\end{equation}
We do not consider $\log$ utility in order to avoid special cases in proofs, but we 
do not believe extending our proofs to include this case would present any difficulties.

We will assume that there is a limit on the expected $u$ utility
that can be achieved by investing in this market without consumption
\begin{equation}
\sup_{\stackrel{(\gamma,\bm{\alpha}) \in {\cal A}}{\gamma=0}} \E( u(F^{(\gamma,\bm{\alpha})}_T) ) < \infty
\label{eq:finiteUtility}
\end{equation}
where $F^{(\gamma,\bm{\alpha})}_T$ is the fund value at time $T$
of following the investment-consumption strategy $(\gamma,\bm{\alpha})$.
For example, the solution to the Merton problem ensures that this will be the case for the Black--Scholes--Merton market.

We will also assume that there is an upper-bound on longevity. We define
\[
T^*= \inf\{t \in {\cal T} \mid \tau < t \text{ almost surely} \}.
\]

Our assumptions on the market allow us to show that consumption
in the final moments of life have a negligible effect on overall utility.
This is the interpretation of the next two lemmas.
\begin{lemma}
	In a market  satisfies \eqref{eq:finiteUtility}, for any $(\gamma, \bm{\alpha}) \in {\cal A}$,
	\[
	\lim_{\epsilon\to 0} \E\left( \int_{T^*-\epsilon}^{T^*} u(\gamma_t) \, \DT \right) \leq 0.
	\]
	for a CRRA utility function of the form \eqref{eq:crra}.
	\label{lemma:utilityAtDeathInsignificant}
\end{lemma}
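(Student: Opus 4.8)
Write the CRRA utility \eqref{eq:crra} as $u(x)=\varepsilon_a\,x^a+c$, with $\varepsilon_a=1$ when $0<a<1$ and $\varepsilon_a=-1$ when $a<0$. The additive constant contributes $\int_{T^*-\epsilon}^{T^*}c\,\DT$, whose expectation tends to $0$ as $\epsilon\to0$ because the measure of the shrinking window does. When $a<0$ the power term $-\gamma_t^a$ is non-positive for $\gamma_t\ge0$, so the integrand is bounded above by $c$ and $\limsup_{\epsilon\to0}\E\int_{T^*-\epsilon}^{T^*}u(\gamma_t)\,\DT\le0$ with no market input at all; this, together with the fact that the $a$-term may integrate to $-\infty$, is exactly why the statement asks only for $\le0$ rather than $=0$. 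The substantive case is $0<a<1$, where I would in fact prove the stronger statement that the window integral tends to $0$.

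The plan for $0<a<1$ is to establish $\E\int_0^{T^*}\gamma_t^a\,\ed t<\infty$; dominated convergence then forces $\E\int_{T^*-\epsilon}^{T^*}\gamma_t^a\,\ed t\to0$, since as $\epsilon\downarrow0$ the window collapses to the null set $\{T^*\}$. The conceptual engine is the pathwise H\"older split
\[
\gamma_t^a=\big((1-F_\tau(t))^C\gamma_t\big)^a\,(1-F_\tau(t))^{-aC},
\]
with conjugate exponents $1/a$ and $1/(1-a)$, which yields
\[
\int_0^{T^*}\gamma_t^a\,\ed t\le\Big(\int_0^{T^*}(1-F_\tau(t))^C\gamma_t\,\ed t\Big)^a\Big(\int_0^{T^*}(1-F_\tau(t))^{-aC/(1-a)}\,\ed t\Big)^{1-a}.
\]
Here the first factor is a power of the present value of the consumption stream, which the budget constraint ties to $X_0$, and the second factor is deterministic and finite precisely when the mortality tail is not too thin. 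In the individual case $C=0$ the second factor is simply $T^{*\,1-a}$ and the first is controlled directly by the wealth, so the estimate is immediate.

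The main obstacle is controlling $\E_\P$ of the first H\"older factor. The budget constraint is naturally an inequality for the $\Q$-expectation of discounted consumption cost, whereas \eqref{eq:finiteUtility} controls only $a$-th moments of wealth under $\P$ (never first moments), so one cannot simply change measure or invoke $\E_\P(Y_t)$. The route through is CRRA homogeneity: dominating the consumed wealth pathwise by the fund value $\tilde Y_t$ of the \emph{same} investment strategy with zero consumption gives $\sup_{t\le T^*}\E_\P(Y_t^a)<\infty$ from \eqref{eq:finiteUtility} (park $\tilde Y_t$ in the bond to time $T$ and use the scaling $\E(\tilde Y_t^a)\le e^{|ar|T}\sup_{\gamma=0}\E(F_T^a)$), after which the window consumption utility is bounded by the homogeneous conditional value $\kappa(\epsilon)\,Y_{T^*-\epsilon}^a$ of the associated Merton-type consumption problem. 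Everything then hinges on $\kappa(\epsilon)\to0$, which by the computation above reduces to integrability of $(1-F_\tau(t))^{-aC/(1-a)}$ near $T^*$ in the collective case $C=1$. I expect this to be the crux, and to require a mild regularity hypothesis on the force of mortality near $T^*$ (met, for instance, whenever $1-F_\tau(t)$ vanishes no faster than a fixed power of $T^*-t$), reflecting the genuine fact that mortality credits magnify per-capita consumption at the very end of life.
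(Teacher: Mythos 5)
Your handling of the additive constant $c$ and of the case $a<0$ is fine (for $a<0$ your observation that the power term is non-positive makes the bound immediate, which is if anything cleaner than the paper). The genuine gap is in the main case $0<a<1$, and you have located it yourself without resolving it: your route needs $\E_\P$ of the first H\"older factor, which is essentially a first moment of cumulative consumption under $\P$, and it then needs $\kappa(\epsilon)\to 0$, which forces integrability of $(1-F_\tau(t))^{-aC/(1-a)}$ near $T^*$. Neither ingredient is available: assumption \eqref{eq:finiteUtility} controls only $a$-th moments, and the lemma imposes no condition on how fast $1-F_\tau$ vanishes at $T^*$, so the extra mortality-tail hypothesis you say you would need is not part of the statement. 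As written, the argument for $0<a<1$ does not close.

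The paper closes this case (and $a<0$ simultaneously) with a device that avoids both obstacles. First, redirect the consumption of $(\gamma,\bm{\alpha})$ into the risk-free asset: the resulting zero-consumption strategy is admissible, its terminal wealth dominates $\int_0^T e^{r(T-t)}\gamma_t\,\DT$, and \eqref{eq:finiteUtility} therefore yields the \emph{$a$-th moment} bound $\E\bigl(u\bigl(\int_{T^*-\epsilon}^{T^*}\gamma_t\,\DT\bigr)\bigr)\leq C_2$ with $C_2$ independent of $\epsilon$ --- no first moment is ever required. Second, apply Jensen's inequality with respect to the normalized measure $\tfrac{1}{\epsilon}\DT$ on the shrinking window itself, giving
\[
u\Bigl(\tfrac{1}{\epsilon}\int_{T^*-\epsilon}^{T^*}\gamma_t\,\DT\Bigr)\;\geq\;\tfrac{1}{\epsilon}\int_{T^*-\epsilon}^{T^*}u(\gamma_t)\,\DT,
\]
and use positive homogeneity $u(x/\epsilon)=|u(\epsilon)|^{-1}u(x)$ (with $c=0$) to conclude $\E\bigl(\int_{T^*-\epsilon}^{T^*}u(\gamma_t)\,\DT\bigr)\leq \epsilon^{1-a}C_2\to 0$. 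This is your H\"older/Jensen idea, but applied over the window $[T^*-\epsilon,T^*]$ rather than over all of $[0,T^*]$: the factor $\epsilon/|u(\epsilon)|=\epsilon^{1-a}$ coming from the window's shrinking measure replaces your dominated-convergence step and makes any regularity of the mortality tail unnecessary. You should abandon the global-integrability plan and restructure along these lines.
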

\begin{proof}
	The result is trivial if ${\cal T}$ is discrete, so we assume ${\cal T}$
	is an interval.
	
	The constant $c$ in the expression  \eqref{eq:crra}
	for $u(x)$ has no effect on the value
	of this limit, so we may assume it is zero.
	
	Given $(\gamma, \bm{\alpha}) \in {\cal A}$ define $\tilde{\bm{\alpha}}$ to be
	the investment strategy given by the portfolio given by $\bm{\alpha}$
	together with an additional investment of $\gamma_t$ in the risk-free
	asset at time $t$. Let $F^{\tilde{\bm{\alpha}}}_t$
	denote the fund value of the strategy $(0,\tilde{\bm{\alpha}})$ at time $t$. Since $(0,\tilde{\bm{\alpha}}) \in {\cal A}$ we have
	from the assumption \eqref{eq:finiteUtility} that
	\[
	\E\left( u \left( \int_0^T e^{r(T-t)} \gamma_t \, \DT \right) \right) 
	\leq \E( F^{(0,\tilde{\bm{\alpha}})}_T ) \leq C_1
	\]
	for some finite $C_1$. Hence
	\[
	\E\left( u \left( \int_{T^*-\epsilon}^{T^*} \gamma_t \, \DT \right) \right) \leq C_2
	\]
	for some finite $C_2$.
	
	By the positive homogeneity of $u$, we now have
	\begin{equation}
	\E\left( |u(\epsilon)| u\left( \frac{1}{\epsilon} \int_{T^*-\epsilon}^{T^*} \gamma_t \, \DT
	\right) \right) \leq C_2.
	\label{eq:jensenable}
	\end{equation}
	The measure of $[T^*-\epsilon,T^*]$ with respect to $\frac{1}{\epsilon} \DT$
	is one, so we may think of the
	term 
	\[
	\frac{1}{\epsilon} \int_{T^*-\epsilon}^{T^*} \gamma_t \, \DT
	\]
	as an expectation. So we may apply Jensen's inequality to
	\eqref{eq:jensenable}
	to obtain
	\[
	\E\left( |u(\epsilon)| \frac{1}{\epsilon} \int_{T^*-\epsilon}^{T^*} u(\gamma_t) \, \DT \right) \leq C_2.
	\]
	Hence
	\[
	\E\left( \int_{T^*-\epsilon}^{T^*} u(\gamma_t) \, \DT \right) \leq \frac{\epsilon}{|u(\epsilon)|} C_2 = |\epsilon|^{1-a} C_2.
	\]
\end{proof}

\begin{lemma}
	Let $(\gamma, \bm{\alpha}) \in {\cal A}$ be a strategy which satisfies
	$\gamma_t \geq \delta$ for some $\delta>0$ for all $t \in (0,T^*)$.
	Define a strategy $((\gamma_\epsilon),\bm{\alpha})$ by following the
	investment-consumption strategy $(\gamma, \bm{\alpha})$ up to time $T^*-\epsilon$ but then reduce consumption to $\delta$.
	Under the assumptions of Lemma \ref{lemma:utilityAtDeathInsignificant}
	\[
	\liminf_{\epsilon\to 0} \E\left( -\exp\left( 
	-\int_0^\tau u((\gamma_\epsilon)_t) \DT \right) \right)
	\geq 
	\E\left( -\exp\left( 
	-\int_0^\tau u(\gamma_t) \DT \right) \right).
	\]
	\label{lemma:gammaEpsilon}
\end{lemma}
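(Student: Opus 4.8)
The plan is to exploit the fact that, since $w(x)=-\exp(-x)$ is increasing and the modified consumption never exceeds the original, the modified gain is pointwise no larger than the original gain; the real content of the lemma is therefore that this gap vanishes as $\epsilon\to0$. Writing $I:=\int_0^\tau u(\gamma_t)\,\DT$ and $I_\epsilon:=\int_0^\tau u((\gamma_\epsilon)_t)\,\DT$, the two strategies agree on $\{\tau\leq T^*-\epsilon\}$, while on $\{\tau>T^*-\epsilon\}$ we have $I-I_\epsilon=\int_{T^*-\epsilon}^{\tau}(u(\gamma_t)-u(\delta))\,\DT\geq0$, because $\gamma_t\geq\delta$ on $(0,T^*)$. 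Thus $I\geq I_\epsilon$ everywhere, and it suffices to prove $\E(\exp(-I_\epsilon)-\exp(-I))\to0$, since $\E(-\exp(-I_\epsilon))-\E(-\exp(-I))=-\E(\exp(-I_\epsilon)-\exp(-I))$ and so this would give $\liminf_{\epsilon\to0}\E(-\exp(-I_\epsilon))\geq\E(-\exp(-I))$.

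For the quantitative estimate I would use the elementary inequality $\exp(-I_\epsilon)-\exp(-I)\leq(I-I_\epsilon)\exp(-I_\epsilon)$, valid whenever $I\geq I_\epsilon$ (it follows from $\int_{I_\epsilon}^{I}e^{-s}\,\ed s\leq(I-I_\epsilon)e^{-I_\epsilon}$). The purpose of this step is to reduce everything to controlling $\E((I-I_\epsilon)\exp(-I_\epsilon))$. The two factors here are correlated, and decoupling them is the crux of the argument: the idea is to bound $\exp(-I_\epsilon)$ by a deterministic constant. Since $(\gamma_\epsilon)_t\geq\delta$ for all $t$, we have $I_\epsilon\geq u(\delta)\,\mu([0,\tau])$, where $\mu$ denotes the measure $\DT$; using the longevity bound $\tau\leq T^*<\infty$ this gives $I_\epsilon\geq\min(0,u(\delta)T^*)$, and hence $\exp(-I_\epsilon)\leq K:=\max(1,e^{-u(\delta)T^*})$.

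With this uniform bound in hand, $\E((I-I_\epsilon)\exp(-I_\epsilon))\leq K\,\E(I-I_\epsilon)$. I would then bound $I-I_\epsilon$ by extending the upper limit of integration from $\tau$ up to $T^*$, which is legitimate because the integrand $u(\gamma_t)-u(\delta)$ is nonnegative on $(0,T^*)$; this yields $\E(I-I_\epsilon)\leq\E\bigl(\int_{T^*-\epsilon}^{T^*}u(\gamma_t)\,\DT\bigr)-u(\delta)\,m_\epsilon$, where $m_\epsilon$ is the $\DT$-measure of $(T^*-\epsilon,T^*)$. The first term has nonpositive limit superior by Lemma \ref{lemma:utilityAtDeathInsignificant}, while $m_\epsilon\to0$, so $\limsup_{\epsilon\to0}\E(I-I_\epsilon)\leq0$. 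Chaining the three estimates gives $\limsup_{\epsilon\to0}\E(\exp(-I_\epsilon)-\exp(-I))\leq0$, which is exactly the required statement.

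The main obstacle is precisely the correlation between the size of the deficit $I-I_\epsilon$ and the Kihlstrom--Mirman reweighting $\exp(-I_\epsilon)$: the exponential reweighting is what prevents a direct appeal to Lemma \ref{lemma:utilityAtDeathInsignificant}, and the finiteness of $T^*$ together with the floor $\gamma_\epsilon\geq\delta$ is what lets me replace the random weight by the constant $K$. In the case where ${\cal T}$ is discrete the window $(T^*-\epsilon,T^*)$ eventually contains no grid points, so $I=I_\epsilon$ for all small $\epsilon$ and the statement is immediate, exactly as in the discrete case of Lemma \ref{lemma:utilityAtDeathInsignificant}.
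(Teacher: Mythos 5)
Your argument is correct, but it reaches the conclusion by a genuinely different route from the paper's. The paper handles the coupling between the tail deficit and the exponential reweighting by a convexity split: it writes the exponent as the sum of the original integral and the tail correction, applies the concavity inequality
\[
-\exp\Bigl(\lambda\tfrac{a}{\lambda}+(1-\lambda)\tfrac{b}{1-\lambda}\Bigr)\;\geq\;\lambda\bigl(-e^{a/\lambda}\bigr)+(1-\lambda)\bigl(-e^{b/(1-\lambda)}\bigr),
\]
then uses Jensen's inequality to pull the expectation of the tail term inside the exponential so that Lemma \ref{lemma:utilityAtDeathInsignificant} can be invoked, and finally lets $\lambda\to1$ via continuity of $L^p$ norms. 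You instead exploit the pointwise ordering $I_\epsilon\leq I$, the elementary bound $e^{-I_\epsilon}-e^{-I}\leq(I-I_\epsilon)e^{-I_\epsilon}$, and---the step with no counterpart in the paper---a uniform deterministic bound $e^{-I_\epsilon}\leq K$ extracted from the consumption floor $\gamma_\epsilon\geq\delta$ together with the finite horizon $T^*<\infty$. That bound decouples the two correlated factors and reduces the whole lemma to $\limsup_{\epsilon\to0}\E(I-I_\epsilon)\leq0$, which you then obtain from Lemma \ref{lemma:utilityAtDeathInsignificant} exactly as the paper does (your extension of the upper integration limit from $\tau$ to $T^*$ is legitimate because the integrand $u(\gamma_t)-u(\delta)$ is nonnegative, and the residual $u(\delta)m_\epsilon$ indeed vanishes since $\delta>0$ forces $u(\delta)$ finite). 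What each approach buys: yours is more elementary---no Jensen step, no $L^p$ limit, no splitting parameter---and makes the role of the hypotheses $\gamma_t\geq\delta>0$ and $T^*<\infty$ completely explicit; the paper's convexity argument is less dependent on a uniform lower bound for $I_\epsilon$ and would adapt more readily to settings where such a bound is unavailable, at the cost of the extra limiting argument in $\lambda$. Your observation that the discrete-time case is immediate (for $\epsilon<\delta t$ the modification occurs after death almost surely, so $I=I_\epsilon$) is also correct and mirrors the paper's treatment of the discrete case in Lemma \ref{lemma:utilityAtDeathInsignificant}.
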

\begin{proof}
	Let us write $T_\epsilon$ for the term we wish to bound. Choose $\lambda \in (0,1)$,
	then
	\begin{align*}	
	T_\epsilon&:=\E\left( -\exp\left( 
	-\int_0^\tau u((\gamma_\epsilon)_t) \DT \right) \right)	\\
	&=
	\E\left( -\exp\left( 
	- \int_0^\tau u(\gamma_t) \DT 
	- \int_{T-\epsilon}^{\max\{\tau,T^*-\epsilon\}}
	[u(\delta) - u(\gamma_t)] \DT
	\right) \right)	 \\
	&\geq
	\lambda \E\left( -\exp\left( 
	-\frac{1}{\lambda} \int_0^\tau u(\gamma_t) \DT \right) \right) \\
	&\quad +(1-\lambda) \E\left( -\exp\left( 
	-\frac{1}{1-\lambda} \int_{T-\epsilon}^{\max\{\tau,T^*-\epsilon\}}
	[u(\delta) - u(\gamma_t)] \DT
	\right) \right)
	\end{align*}	
	by the concavity of $-\exp$. Then by Jensen's equality
	\begin{align*}	
	T_\epsilon&\geq
	\lambda \E\left( -\exp\left( 
	-\frac{1}{\lambda} \int_0^\tau u(\gamma_t) \DT \right) \right) \\
	&\quad -(1-\lambda) \exp\left( \E \left( 
	-\frac{1}{1-\lambda} \int_{T-\epsilon}^{\max\{\tau,T^*-\epsilon\}}
	[u(\delta) - u(\gamma_t)] \DT
	\right) \right).	
	\end{align*}	
	Hence by Lemma \ref{lemma:utilityAtDeathInsignificant}
	\begin{align*}	
	\liminf_{\epsilon\to 0} T_\epsilon&\geq
	\lambda \E\left( -\exp\left( 
	-\frac{1}{\lambda} \int_0^\tau u(\gamma_t) \DT \right) \right) 
	-(1-\lambda) \\
	&=
	\lambda \E\left( -\exp\left( 
	-\int_0^\tau u(\gamma_t) \DT \right)^{\frac{1}{\lambda}} \right) 
	-(1-\lambda).
	\end{align*}	
	Taking the limit as $\lambda \to 1$ and using the continuity properties of
	$L^p$ norms in $p$, we obtain the desired result.
\end{proof}

The next lemma allows us to estimate the probability of a significant
deviation from the expected rate of mortality over the continuum of time points $[0,t_0]$ by estimating the probability of a more significant deviation from the expected rate of mortality over a finite set of time points.
\begin{lemma}
	For a continuous mortality distribution,
	given a time point $0\leq t_0<T^*$, for any $\epsilon \in (0,1)$ there exists
	a finite set of points $t_i \in [0,t_0)$, indexed by $i \in I$ such that
	\[
	\P\left( \forall t \in [0,t_0] \,:\, n_t \leq \left( \frac{1}{1-\epsilon} \right)^2 \E(n_t) \right) 
	\geq 
	\P\left( \forall i \in I \,:\, n_{t_i} \leq \left( \frac{1}{1-\epsilon} \right) \E(n_{t_i}) \right)
	.
	\]
	\label{lemma:finiteTimePoints}
\end{lemma}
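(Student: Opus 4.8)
The plan is to exploit two monotonicities together with one continuity. Along every sample path the survivor count $t \mapsto n_t$ is non-increasing, since deaths only accumulate as time advances, so $t' \le t$ implies $n_{t'} \ge n_t$. Meanwhile the deterministic profile $t \mapsto \E(n_t) = n\,\P(\tau \ge t)$ is continuous (the mortality distribution is continuous), non-increasing, and strictly positive on the compact interval $[0,t_0]$, the positivity at the right endpoint being guaranteed by $t_0 < T^*$, which forces $\E(n_{t_0}) > 0$. The whole point is that these facts let me trade a bound holding uniformly over the continuum $[0,t_0]$ for a bound holding only at finitely many points, at the cost of one extra factor of $(1-\epsilon)^{-1}$.

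First I would build the finite grid from level sets of the deterministic profile. Put $\ell_j := \E(n_0)(1-\epsilon)^j = n(1-\epsilon)^j$ and let $J$ be the largest index with $\ell_J > \E(n_{t_0})$; since $\E(n_{t_0})>0$ and $\ell_j \to 0$, this $J$ is finite, and by maximality $\ell_{J+1} \le \E(n_{t_0})$. For each $j \in \{0,1,\dots,J\}$ the intermediate value theorem, applied to the continuous decreasing function $\E(n_\cdot)$, furnishes a point $s_j \in [0,t_0)$ with $\E(n_{s_j}) = \ell_j$, and we take $s_0 = 0$. This is the finite set $\{s_i\}_{i \in I}$ with $I = \{0,\dots,J\}$. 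By construction consecutive values satisfy $\E(n_{s_{i+1}}) = (1-\epsilon)\,\E(n_{s_i})$, while the maximality of $J$ gives the analogous spacing bound $\E(n_{t_0}) \ge (1-\epsilon)\,\E(n_{s_J})$ for the final interval $[s_J,t_0]$.

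Next I would establish the pathwise implication: on the event $\{\forall i \in I : n_{s_i} \le (1-\epsilon)^{-1}\E(n_{s_i})\}$ the continuum bound $n_t \le (1-\epsilon)^{-2}\E(n_t)$ holds for every $t \in [0,t_0]$. Fix such a $t$ and let $s_i$ be the largest grid point with $s_i \le t$, so $t$ lies in $[s_i,s_{i+1}]$ (or in $[s_J,t_0]$, with $t_0$ playing the role of $s_{i+1}$). Monotonicity of $n_\cdot$ gives $n_t \le n_{s_i}$, the assumed finite-point bound gives $n_{s_i} \le (1-\epsilon)^{-1}\E(n_{s_i})$, and the grid spacing together with monotonicity of $\E(n_\cdot)$ gives $\E(n_{s_i}) \le (1-\epsilon)^{-1}\E(n_{s_{i+1}}) \le (1-\epsilon)^{-1}\E(n_t)$. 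Chaining,
\[
n_t \le n_{s_i} \le (1-\epsilon)^{-1}\E(n_{s_i}) \le (1-\epsilon)^{-2}\E(n_{s_{i+1}}) \le (1-\epsilon)^{-2}\E(n_t).
\]
Since this holds for every $t$, the finite-point event is contained in the continuum event, and taking $\P$ of both sides yields the stated inequality.

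The argument is essentially routine, being monotonicity plus continuity, so I do not expect a deep obstacle; the care is all in the bookkeeping. Two points deserve attention. First, the right endpoint $t_0$ is deliberately excluded from the grid (the $s_i$ live in $[0,t_0)$), so it must be absorbed into the final interval, which is exactly where the maximality bound $\E(n_{t_0}) \ge (1-\epsilon)\E(n_{s_J})$ is needed. Second, one must verify that the grid is genuinely finite, which is precisely where the hypothesis $t_0 < T^*$, i.e. $\E(n_{t_0})>0$, is indispensable. The only thing to get right conceptually is that the bound is always invoked at the \emph{left} endpoint of the interval containing $t$, and that the two separate factors of $(1-\epsilon)^{-1}$, one supplied by the finite-point hypothesis and one by the geometric grid spacing, combine to give exactly the square $(1-\epsilon)^{-2}$ appearing in the statement.
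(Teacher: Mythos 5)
Your proof is correct and follows essentially the same route as the paper's: both construct a finite grid on which $\E(n_\cdot)$ decreases geometrically by the factor $1-\epsilon$ between consecutive points (the paper builds it by an inductive infimum from $t_0$ backwards, you by the intermediate value theorem from $0$ forwards), and both then chain pathwise monotonicity of $n_t$, the finite-point bound at the left endpoint, and the grid spacing to pick up the two factors of $(1-\epsilon)^{-1}$. The only difference is presentational — you argue by direct inclusion of events where the paper argues by contraposition.
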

\begin{proof}
	We define $t_i$ inductively. If $t_{i-1}=0$, we are done and take the index 
	set $I=\{0,1,\ldots,i-1\}$. Otherwise define
	\[
	t_i = \inf \left\{ t \mid t=0 \text{ or } \E(n_{t}) \leq \frac{1}{1-\epsilon} \E(n_{t_{i-1}}) \right\}.
	\]
	If $t_i\neq 0$ we see $\E(n_{t_{i-1}}) \geq \frac{1}{1-\epsilon} \E(n_{t_i})$,
	so for sufficiently large $i$ we must have $t_i=0$. Hence the index set, $I$, is finite.
	
	Given $t \in [0,t_0]$ we can find $i \in I$ with $t_i \leq t \leq t_{i-1}$.
	Suppose 
	\[
	n_t > \left( \frac{1}{1-\epsilon}\right)^2 \E( n_t)	
	\]
	then we have
	\[
	n_{t} > \left( \frac{1}{1-\epsilon}\right)^2 \E( n_{t_i})
	\geq \left( \frac{1}{1-\epsilon}\right)^2 \E( n_{t_{i-1}})
	\geq \left( \frac{1}{1-\epsilon}\right) \E( n_{t}).
	\]
\end{proof}

Our final lemma is a basic continuity result for concave functions.
\begin{lemma}
	Let $C$ be an open convex subset of $\R^n$, let $f$ be a concave
	function on $\R^n$ and let $c^*$ be a point in the boundary $\partial C$,
	then
	\[
	\sup_{c \in C} f(c) \geq  f(c^*).
	\]
	\label{eq:concavityLemma}
\end{lemma}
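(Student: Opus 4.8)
The plan is to reduce the whole statement to the one-dimensional behaviour of $f$ along the segment joining an interior point of $C$ to $c^*$, using nothing more than the definition of concavity together with the line-segment principle of convex analysis. First I would observe that $\partial C \neq \emptyset$ forces $C$ itself to be nonempty, so I may fix a point $c_0 \in C$; since $f$ is a (finite-valued) concave function on all of $\R^n$, the number $f(c_0)$ is finite. For $t \in [0,1)$ set
\[
c_t := (1-t)\,c_0 + t\,c^*.
\]
The key geometric input is that every $c_t$ lies in $C$. This is exactly the line-segment principle (\cite{rockafellar}, Theorem 6.1): because $C$ is open and convex we have $C = \operatorname{int} C$, and $c^* \in \partial C \subseteq \overline{C}$, so the half-open segment from the interior point $c_0$ to the closure point $c^*$ remains inside $\operatorname{int} C = C$ for all $t < 1$.

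With $c_t \in C$ established, concavity of $f$ gives
\[
f(c_t) \;\geq\; (1-t)\,f(c_0) + t\,f(c^*),
\]
and since $c_t$ is an admissible competitor in the supremum this yields
\[
\sup_{c \in C} f(c) \;\geq\; f(c_t) \;\geq\; (1-t)\,f(c_0) + t\,f(c^*).
\]
Letting $t \to 1^{-}$ and using the finiteness of $f(c_0)$, so that $(1-t)\,f(c_0) \to 0$ while $t\,f(c^*) \to f(c^*)$, produces $\sup_{c \in C} f(c) \geq f(c^*)$, which is the claim. If $f(c^*) = -\infty$ the inequality holds trivially, so nothing is lost by treating $f(c^*)$ as finite in the limiting step.

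I expect the only genuinely delicate point to be the justification that $c_t \in C$, i.e.\ the line-segment principle itself; one can either cite it or reprove it directly by showing that a small ball around $c_t$ is covered by convex combinations $(1-t)\,a + t\,y$ with $a$ near $c_0$ and $y \in C$ near $c^*$. Everything after that is a single application of concavity followed by a limit. I would also flag that finiteness of $f$ is essential: for \emph{extended}-real-valued concave functions the statement can fail, as is seen by taking $f$ equal to $0$ on a closed half-space and $-\infty$ off it with $C$ an open set abutting the face of that half-space, in which case $f$ jumps upward at the boundary and $\sup_{c\in C} f = -\infty < 0 = f(c^*)$. Since the lemma assumes $f$ concave and finite on all of $\R^n$, this pathology cannot occur and the continuity argument above is valid.
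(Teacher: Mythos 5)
Your proof is correct and rests on the same idea as the paper's: evaluate $f$ at points of the open segment joining a point of $C$ to $c^*$ and apply the concavity inequality there. The paper packages this as a proof by contradiction using the single point $\tfrac{3}{4}c^* + \tfrac{1}{4}c$ with $c$ chosen near the supremum, whereas you argue directly and let $t \to 1^{-}$; you also make explicit the line-segment principle guaranteeing $c_t \in C$, which the paper uses only tacitly.
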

\begin{proof}
	Suppose for a contradiction that $\sup_{c \in C} f(c) = f(c^*)-\epsilon$
	for some $\epsilon>0$. We can then find a point $c \in C$
	with $f(c)>f(c^*)-2\epsilon$.
	\[
	f(\frac{3}{4}c^* + \frac{1}{4}c)
	\geq
	\frac{3}{4}f(c^*) + \frac{1}{4}f(c)
	\geq f(c^*)-\frac{1}{2}\epsilon.
	\]
	This then implies that $\sup_{c \in C} f(c) \geq f(c^*)-\frac{1}{2}\epsilon$
	which is the desired contradiction.
\end{proof}	

\begin{theorem}
	In a market with a risk-free asset with risk-free-rate $r\geq 0$
	which satisfies condition \eqref{eq:finiteUtility} and for exponential preferences with $u$
	of the form \eqref{eq:crra} we have
	\[
	v_\infty = \lim_{n\to \infty} v_n
	\]
	where $v_n$ and $v_\infty$ are defined by equation \eqref{eq:fundObjective}.
\end{theorem}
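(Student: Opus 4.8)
The plan is to prove the two inequalities $\liminf_{n\to\infty} v_n \geq v_\infty$ and $\limsup_{n\to\infty} v_n \leq v_\infty$ separately. Throughout I would exploit the fact that for exponential preferences the gain is the expectation of $-\exp(-\int_0^\tau u(\gamma_t)\,\DT)$, which is bounded above by $0$; this nonpositivity lets me discard a low-probability ``bad'' mortality event from an upper bound at essentially no cost, while over the finite horizon $[0,T^*]$ the gain of any admissible strategy stays bounded below once consumption is kept above a floor $\delta>0$. The whole argument rests on the law of large numbers for the survival proportion $\pi_t = n_t/n$, which converges to the deterministic $1-F_\tau(t)$ of the infinite fund, with Lemma \ref{lemma:finiteTimePoints} upgrading pointwise concentration to uniform control over the continuum of times.

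For the lower bound I would start from a strategy $(\gamma,\alpha)$ that is $\epsilon$-optimal for the infinite fund \eqref{eq:robustObjectiveInfinite}. At negligible budgetary cost for small $\delta$ I may assume its consumption is bounded below by $\delta>0$, and by Lemma \ref{lemma:gammaEpsilon} I may further reduce consumption to the floor $\delta$ on $[T^*-\epsilon,T^*]$, losing only $O(\epsilon)$ in gain by Lemma \ref{lemma:utilityAtDeathInsignificant}. I then transplant this strategy to the finite fund after two modifications: a consumption haircut by a factor $(1-\epsilon')^2$, and a small per-capita reserve, of order $\delta T^*$, invested risk-free. On the event that $\pi_t \leq (1-\epsilon')^{-2}(1-F_\tau(t))$ for all $t$ — whose probability tends to $1$ by Lemma \ref{lemma:finiteTimePoints} and the law of large numbers — the haircut guarantees the finite fund's outflow never exceeds that of the infinite fund, so a pathwise comparison gives $Y_t^{\mathrm{fin}} \geq Y_t^{\infty}\geq 0$ and the strategy is admissible and reproduces the infinite-fund consumption. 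On the complementary event the reserve still finances the floor consumption $\delta$ up to $T^*$, keeping the gain finite (this is essential when $a<0$ in \eqref{eq:crra}, where running the fund to zero would give gain $-\infty$). Letting $n\to\infty$ kills the bad-event contribution, and continuity of $v_\infty$ in the budget (Lemma \ref{eq:concavityLemma}) absorbs the haircut and reserve, yielding $\liminf_n v_n \geq v_\infty - O(\epsilon)$; then $\epsilon\to 0$.

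For the upper bound I would take, for each $n$, a strategy that is $\epsilon$-optimal for \eqref{eq:distributionObjective}. Since the gain is nonpositive I may restrict to the good mortality event $G_n=\{(1-\epsilon')^2(1-F_\tau(t)) \leq \pi_t \leq (1-\epsilon')^{-2}(1-F_\tau(t))\ \forall t\}$, whose complement is negligible by Lemma \ref{lemma:finiteTimePoints}, obtaining $v_n \leq \E[\,\mathrm{gain}\cdot\id_{G_n}]+\epsilon$. I would then derandomise the strategy by averaging the consumption over the mortality of members $2,\dots,n$: because individual $1$ shifts $n_t$ by at most one, this produces a market-adapted consumption $\bar\gamma$ that is admissible for the infinite fund, and concavity of the gain functional in $\gamma$ (inherited from concavity of $u$ and $w$) ensures its gain is at least $\E[\mathrm{gain}\cdot\id_{G_n}]$. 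On $G_n$ the near-constancy of $\pi_t$ caps the cost advantage that mortality-adaptation can extract — formally it bounds the covariance of $\gamma_t$ with $\pi_t$ — so $\bar\gamma$ is affordable for the infinite fund with budget $(1+O(\epsilon'))X_0$. Continuity of $v_\infty$ in the budget (Lemma \ref{eq:concavityLemma}) then gives $\E[\mathrm{gain}\cdot\id_{G_n}] \leq v_\infty + O(\epsilon')$, and letting $\epsilon,\epsilon'\to 0$ yields $\limsup_n v_n \leq v_\infty$.

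The crux in both directions is the coupling between the random survival proportion $\pi_t$ and the budget constraint. For the lower bound the difficulty is pathwise admissibility: the transplanted strategy must keep $Y_t\geq 0$ for every mortality scenario, not merely in expectation, and the haircut-plus-reserve construction — admissible with high probability by the uniform concentration of Lemma \ref{lemma:finiteTimePoints}, and protected on the rare complement by the floor $\delta$ of Lemma \ref{lemma:gammaEpsilon} — is what averts the $-\infty$ gain that running the fund to zero would produce when $a<0$ in \eqref{eq:crra}. For the upper bound the delicate point is the affordability of the derandomised strategy: the finite fund's only genuine advantage is its ability to consume more in scenarios where fewer members survive, i.e.\ to correlate $\gamma_t$ negatively with $\pi_t$, and quantifying that this advantage collapses as the fluctuations of $\pi_t$ vanish is the heart of the matter.
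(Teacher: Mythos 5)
The lower-bound half of your proposal is essentially the paper's own proof. The paper likewise takes an $\epsilon^*$-optimal infinite-fund strategy, forces a consumption floor $\delta$ by overlaying a cheap constant-consumption strategy, invokes Lemma \ref{lemma:gammaEpsilon} to drop to the floor on $[T^*-\epsilon,T^*]$, and transplants the result to the finite fund as a two-account strategy: a haircut $\lambda_1^2$ per survivor funded with budget $\lambda_1 B$, plus a risk-free reserve account that keeps the gain bounded below off the good event ${\cal G}$, with Lemma \ref{lemma:finiteTimePoints} sending $\P({\cal G})\to 1$. The only cosmetic difference is how the haircut is removed in the limit: the paper applies Lemma \ref{eq:concavityLemma} to the concave function of the mixing weights $(\mu_1,\mu_2)=(\lambda_1^2,\lambda_2)$ at the boundary point $(1,0)$, whereas you invoke continuity of $v_\infty$ in the budget (which the paper also uses, but only at the initial step $B\mapsto B-\tilde\delta$). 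Either works.

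The genuine divergence is the upper bound. The paper proves only that $v_n\geq v_\infty-3\epsilon^*$ for large $n$ and then declares the proof complete; it offers no argument for $\limsup_n v_n\leq v_\infty$. Your derandomization sketch is therefore an addition rather than an alternative, and its outline (condition on $\tau_1$, average the consumption over $\tau_2,\dots,\tau_n$, use concavity of $u$ and $w$ via Jensen, and bound the affordability loss on the good event $G_n$) is the natural route. It is, however, the least developed part of your proposal. Two steps need real work: (i) you must show the averaged consumption $\bar\gamma$ is attainable by the infinite fund, i.e.\ that it is progressively measurable for the market filtration and that feasibility of its $\Q$-expected discounted cost implies pathwise admissibility --- this is where completeness of the market has to be invoked, and where the interaction between conditioning on $\tau_1$ and restricting to $G_n$ must be untangled; and (ii) the claim that the covariance of $\gamma_t$ with $\pi_t$ on $G_n$ contributes only $O(\epsilon')$ to the cost needs an integrability bound on $\gamma$, since consumption can spike on low-survival paths and near-constancy of $\pi_t$ alone does not cap the covariance. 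Neither obstacle looks fatal, but as written this half is a plan rather than a proof --- which, to be fair, still leaves you ahead of the paper on that direction.
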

\begin{proof}
	Let us write $v_\infty(B)$ for the value function for
	the problem with $n=\infty$ as a function
	of the budget $B$. For exponential preferences $v_\infty(B)$ is bounded above by $0$. We can always follow a constant-consumption-per-initial-individual strategy combined
	with investment in the risk-free account to get a strategy for which
	$v_\infty(B)$ is finite. As a function of $B$, $v_\infty(B)$ is concave.
	Thus $v_\infty$ is a continuous function on $(0,\infty)$.
	
	Hence given $\epsilon^*>0$ we can find $\tilde{\delta}>0$
	such that $v_\infty(B-\tilde{\delta})\geq v_\infty(B) - \tfrac{1}{2}\epsilon^*$.
	We may choose an
	investment-consumption strategy $(\tilde{\gamma}^1,\tilde{\alpha}^1)$
	for the problem with $n=\infty$ and budget $B-\delta$ such that
	\[
	{\cal J}(\tilde{\gamma}^1,\tau) \geq v_\infty(B-\tilde{\delta}) - \tfrac{1}{2} \epsilon^*.
	\]
	By adding on a constant-consumption-per-initial-individual strategy
	of cost less than $\tilde{\delta}$ we can find an investment-consumption
	strategy $(\gamma^1,\bm{\alpha}^1)$ for the problem with $n=\infty$ and
	budget $B$ such that
	\[
	{\cal J}(\gamma^1,\tau) \geq v_\infty - \epsilon^*.
	\]
	and moreover, the consumption per initial individual, and hence the consumption per survivor, never drops below $\delta:=C_1 \tilde{\delta}$ for an appropriate constant $C_1$.
	
	Let $\gamma_\epsilon$ be the strategy obtained from $\gamma^1_t$ by
	reducing consumption at time $T-\epsilon$ as described in Lemma \ref{lemma:gammaEpsilon}.
	
	Let $(\gamma^2,\bm{\alpha}^2)$ be
	the investment-consumption strategy for the investment problem for finite
	$n$ of consuming a fixed amount $c$ per initial individual at each time, and investing only in the risk-free asset. The consumption per survivor
	in this strategy will always be at least $c$. This strategy will cost
	a finite amount to implement. We assume that $c$ is chosen to ensure that
	the cost of strategy $(\gamma^2,\bm{\alpha}^2)$ is equal to the budget.

	Given parameters $\Lambda = (\lambda_1, \lambda_2, \epsilon) \in [0,1] \times [0,1] \times [0,T)$ we define an investment-consumption strategy $(\gamma^\Lambda, \bm{\alpha}^\Lambda)$ for the investment problem for finite $n$
	as follows. Divide the initial budget, $B$ into two accounts, account 1 and account 2,
	allocating a quantity $B \lambda_i$ to account $i$. From account 1,
	consume an amount $(\lambda_1)^2 (\gamma_\epsilon)_t$ per survivor at each time $t$ unless that account has run out of money. From account 2,
	consume an amount $\lambda_2 \gamma^2_t$ at each time $t \leq T$. In account $i$ invest in the proportions $\bm{\alpha}^i_t$ at each time $t$. Note
	that account 2 will never run out of money and hence,
	so long as $\lambda_1+\lambda_2 \leq 1$, this will be an admissible strategy.
	
	Let us write
	\[
	W^\Lambda = -\exp\left( - \int_0^{\tau_\iota} u(\gamma^\Lambda_t) \, \DT \right)
	\]
	so that the gain function is given by
	\[
	{\cal J}_\iota(\gamma^\Lambda, \tau_\iota) = \E_{\Omega_n}(W^\Lambda)
	\]
	The subscript ${\Omega_n}$ emphasizes that the expectation is taken
	in the probability space for the problem of $n$ individuals.
	
	Let ${\cal G}$ denote the event that the proportion of individuals
	surviving is less than or equal to $\frac{1}{\lambda_1} s_t$ at all times up to $T-\epsilon$. Because we only consume an amount $(\lambda_1)^2 (\gamma_\epsilon)_t$ per survivor from account 1, account 1 will not run
	out of money before time $T^*-\epsilon$ in the event ${\cal G}$, indeed
	by time $T^*-\epsilon$ we will still have sufficient funds to pay $\delta$
	per initial individual, and hence at least $\delta$ per survivor,
	until time $T^*$. Thus account 1 never runs out of funds in the event ${\cal G}$. We
	define the gain function conditioned on ${\cal G}$ by
	\[
	j(n,\Lambda):= \E_{\Omega_n} \left( W^\Lambda \mid \cal G \right).
	\]
	However, conditioned on ${\cal G}$, the consumption per survivor is independent of $n$ and in fact
	\[
	j(n,\Lambda) = j(\Lambda):= \E_{\Omega_\infty} \left( -\exp\left(- \int_0^{\tau_\iota} u(\lambda_1^2 (\gamma_\epsilon)_t + \lambda_2 \gamma^2_t) \, \DT \right) \right).
	\]
	Note that 
	\[
	j((1,0,0))={\cal J}(\gamma^1,\tau).
	\]
	
	Let us write $\mu_1=\lambda_1^2$ and $\mu_2=\lambda_2$.
	The set of values for $\mu_i$ which
	give admissible strategies is given by $N=\{(\mu_1,\mu_2):\sqrt{\mu_1} + \mu_2 \leq 1\}$. This contains the open convex set $C=\{ (\mu_1,\mu_2): \mu_1+2\mu_2 < 1\}$ which has the point $(1,0)$ on its boundary.
	
	For fixed $\epsilon$ define
	\[
	j_\epsilon(\mu_1,\mu_2) = j((\sqrt{\mu_1},\mu_2,\epsilon)).
	\]
	This is a concave function on $[0,1]\times[0,1]$. It follows by Lemma \eqref{eq:concavityLemma}
	that 
	\[
	\sup_{(\mu_1,\mu_2) \in C} j_\epsilon(\mu_1,\mu_2)
	\geq j(1,0,\epsilon).
	\]
	Hence if we define $C^\prime:=\{ (\lambda_1, \lambda_2): \lambda_1^2 + 2 \lambda_2 < 1 \}$, we will have
	\[
	\sup_{\Lambda \in C^\prime \times (0,T)} j(\Lambda) \geq 
	\sup_{\epsilon>0} j(1,0,\epsilon).
	\]
	Hence by Lemma \ref{lemma:gammaEpsilon} we have
	\[
	\sup_{\Lambda \in C^\prime \times (0,T)} j(\Lambda) \geq 
	j((1,0,0)) \geq v_\infty-\epsilon^*.
	\]
	
	It follows that we may find
	$\Lambda = (\lambda_1,\lambda_2,\epsilon) \in C^\prime$ such that
	\[
	j(\Lambda)\geq v_\infty-2 \epsilon^*.
	\]
	For the strategy $(\gamma^\Lambda, \alpha^\Lambda)$ we have
	\[
	\begin{split}
	{\cal J}(\gamma^\Lambda, \tau_\iota) &= 
	\E_{\Omega_n}(W \mid {\cal G})\, \P_{\Omega_n}( {\cal G})
	+ \E_{\Omega_n}(W \mid \lnot {\cal G})\, \P_{\Omega_n}( \lnot {\cal G}) \\
	&= j(\Lambda)\, \P_{\Omega_n}( {\cal G})
	+ \E_{\Omega_n}(W \mid \lnot {\cal G})\, \P_{\Omega_n}( \lnot {\cal G}).
	\end{split}
	\]
	Since $\lambda^2>0$, $\E_{\Omega_n}(W \mid \lnot {\cal G})>0$
	because the cashflow received is always bounded below by $\lambda_2 c$.
	
	By Lemma \ref{lemma:finiteTimePoints}, $\P_{\Omega_n}( {\cal G})$ tends to 1 as $n\to \infty$.
	Hence
	\[
	\lim_{n\to \infty} {\cal J}(\gamma^\Lambda, \tau_\iota)
	\geq v_\infty-2 \epsilon^*.
	\]
	So for sufficiently large $n$, $v_n \geq v_\infty-3 \epsilon^*$.
	This is true for arbitrary $\epsilon^*$,
	so $v_n \geq v_\infty$.
\end{proof}

This completes the proof that large collective funds are well approximated by an infinite fund.

\bibliography{collectivization}

\begin{thebibliography}{10}

\bibitem{armstrongClassification}
John Armstrong.
\newblock Classifying markets up to isomorphism.
\newblock {\em arXiv preprint arXiv:1810.03546}, 2018.

\bibitem{ab-main}
John Armstrong and Cristin Buescu.
\newblock Collectivised pension investment.
\newblock {\em arXiv preprint arXiv:1909.12730}, 2019.

\bibitem{ab-ez}
John Armstrong and Cristin Buescu.
\newblock Collectivised pension investment with homogeneous {E}pstein--{Z}in
  preferences.
\newblock {\em arXiv preprint arXiv:1911.10047}, 2019.

\bibitem{forsythLabahn}
Peter~A Forsyth and George Labahn.
\newblock Numerical methods for controlled {H}amilton-{J}acobi-{B}ellman {PDE}s
  in finance.
\newblock {\em Journal of Computational Finance}, 11(2):1, 2007.

\bibitem{kihlstromMirman}
Richard~E Kihlstrom and Leonard~J Mirman.
\newblock Risk aversion with many commodities.
\newblock {\em Journal of Economic Theory}, 8(3):361--388, 1974.

\bibitem{kushnerDupuis}
Harold Kushner and Paul~G Dupuis.
\newblock {\em Numerical methods for stochastic control problems in continuous
  time}, volume~24.
\newblock Springer Science \& Business Media, 2013.

\bibitem{luenberger}
David~G Luenberger.
\newblock {\em Optimization by vector space methods}.
\newblock John Wiley \& Sons, 1997.

\bibitem{vonNeumannMorgernstern}
Oskar Morgenstern and John {von Neumann}.
\newblock {\em Theory of Games and Economic Behavior}.
\newblock Princeton University Press, 1953.

\bibitem{cmi2018}
{Mortality Projections Committee}.
\newblock {\em Working Paper 119. CMI Mortality Projections Model: CMI\_2018}.
\newblock Continuous Mortality Investigation, Institute and Faculty of
  Actuaries, 2019.

\bibitem{pham}
Huy{\^e}n Pham.
\newblock {\em Continuous-time stochastic control and optimization with
  financial applications}, volume~61.
\newblock Springer Science \& Business Media, 2009.

\bibitem{rockafellar}
Ralph~Tyrell Rockafellar.
\newblock {\em Convex Analysis}.
\newblock Princeton University Press, 2015.

\end{thebibliography}
\bibliographystyle{plain}

\end{document}